\numberwithin{equation}{section}
\newcommand{\mynewtheorem}[2]{%
  \newtheorem{#1}{#2}[section]%
  \expandafter\DeclareRobustCommand\expandafter{\csname#1autorefname\endcsname}{#2}%
  \expandafter\let
    \csname c@#1\expandafter\endcsname
    \expandafter=%
    \csname c@mytheorem\endcsname}
  {NOT TO USE}[section]
\theoremstyle{plain}
\newtheorem*{ScalarLemma}{Lemma~\ref{lem:invariantScalarAndSection}}
\theoremstyle{definition}
\newcommand*{\size}[1]{\left|#1\right|}
\newcommand*{\Stab}[2]{{#1}\sb{#2}}
\newcommand*{\orb}[2]{{#1}{#2}}
\newcommand*{\fixedPoints}[2]{{#2}\sp{#1}}
\newcommand{\R}{\mathbb{R}}
\newcommand{\N}{\mathbb{N}}
\DeclareMathOperator{\conv}{conv}
\newcommand{\cube}[1]{[0,1]^{#1}}
\newcommand{\binSet}{\{0,1\}}
\newcommand{\face}[1]{\left\{#1\right\}}
\newcommand{\set}[2]{\left\{#1\,\middle|\,#2\right\}}
\newcommand{\sprod}[1]{\left\langle #1\right\rangle}
\newcommand{\ssprod}[1]{\overline{\langle#1\rangle}}
\newcommand{\tG}{\widetilde{G}}
\newcommand{\tg}{\widetilde{g}}
\newcommand{\onorm}[1]{\left\|#1\right\|_1}
\newcommand{\card}[1]{\left\lvert#1\right\rvert}
\DeclareMathOperator{\supp}{supp}
\DeclareMathOperator{\vertex}{vertex}
\newcommand{\pmatch}{P_{\operatorname{match}}}
\newcommand{\sg}{S}
\newcommand{\ag}{A}
\newcommand{\pA}{\mathcal A}
\newcommand{\pB}{\mathcal B}
\DeclareMathOperator{\xc}{xc}
\newcommand*{\xcs}[1][\ag_{n}]{\operatorname{xc}_{#1}}
\DeclareMathOperator{\relint}{rel.int}
\begin{document}
\title[An algebraic approach to symmetric extended formulations]{An
  algebraic approach to \\ symmetric extended formulations}
\keywords{symmetric extended formulations, polyhedral combinatorics, group theory,
  representation theory, matching polytope}
\subjclass[2000]{Primary 52B15; Secondary 52B05, 20B30}

\author{Gábor Braun}
\address{Universität Leipzig\\
  Institut für Informatik\\
  PF 100920\\
  04009 Leipzig\\
  Germany}
\email{gabor.braun@informatik.uni-leipzig.de}

\author{Sebastian Pokutta}
\address{Friedrich-Alexander-University of Erlangen-Nürnberg\\
  Department of Mathematics\\
  Cauerstrasse 11\\
91058 Erlangen\\
Germany}
 \email{sebastian.pokutta@math.uni-erlangen.de}

\date{\svnToday/Draft/Revision: \svnInfoRevision}

\begin{abstract}
Extended formulations are an important tool to obtain small (even compact)
formulations of polytopes by representing them as projections of
higher dimensional ones.  It is an important
question whether a polytope admits a \emph{small} extended formulation,
i.e., one involving only a polynomial number of inequalities in its
dimension.
For the case of \emph{symmetric} extended formulations (i.e., preserving
the symmetries of the polytope) Yannakakis established a powerful
technique to derive lower bounds and rule out small formulations.  
We rephrase the technique of Yannakakis in a group-theoretic
framework.  This provides a different perspective on symmetric
extensions and considerably simplifies several lower
bound constructions.
\end{abstract}

\maketitle

\section{Introduction}
\label{sec:introduction}

Extended formulations regained a lot of interest lately (cf., e.g.,
\cite{confExtForm2010}, \cite{faenza2009extended},
\cite{faenza2012}, \cite{combBoundsNNR},\cite{goemans2009cfp},
\cite{kaibel2010symmetry}, \cite{kaibel2011}, \cite{kaibel2011extended},
\cite{pashkovisch2009perm}). The main idea behind extended
formulations is to represent a given
polytope as a projection of a higher dimensional
one, which is usually referred
to as the \emph{extension}.
Whereas at first this may not seem useful,
the higher dimensional polytope might be described
by considerably fewer inequalities.
Hence
it might admit a polynomial time solvable linear program,
if not only the number of inequalities is polynomial,
but, also the coefficients appearing in the projection and the defining
inequalities are appropriately polynomially bounded, e.g., in the dimension. 
Therefore, we are in particular interested in finding
\emph{small} extended formulations, i.e.,
whose size (here measured in the number of inequalities only) is polynomial
in the dimension of the initial polytope.

Due to its appeal
of representing a polytope with an exponential number of inequalities
in polynomial size, in
the 1980s Swart tried to show \(P = NP\)
by devising compact extended formulations for the
traveling salesman problem.  All these formulations shared the commonality
of being \emph{symmetric}, and it was Yannakakis's seminal paper (see
\cite{yannakakis1991expressing}) which put an end to this by showing
that the traveling salesman polytope does not admit a symmetric
extended formulation of polynomial size. In a recent paper
(\cite{fmptw2011}) it was shown that the requirement for symmetry can
be dropped as well and an
unconditional super-polynomial lower bound for the size of any
extended formulation of the traveling salesman polytope was obtained.

At its core Yannakakis's work 
provides techniques for computing the size of an extended formulation via
decomposing slack matrices as the product of two matrices with
non-negative entries.  Moreover, his work establishes a method
for bounding from below the size of symmetric extended formulations.
Using these techniques, he proved, among others, that
the perfect matching
polytope cannot have a symmetric extended formulation of polynomial
size, which was the basis for his impossibility result on the
TSP polytope.

This result was later extended by \cite{kaibel2010symmetry}  
to (weakly-)symmetric extended
formulations of cardinality constrained matching which in contrast do
possess an \emph{asymmetric} extended formulation of polynomial
size.
Similarly, in \cite{goemans2009cfp} an asymmetric extended
formulation of optimal size \(O(n \log n)\) for the permutahedron is
provided, based on AKS-sorting networks.
A symmetric extended formulation for the permutahedron is
the Birkhoff polytope with
\(O(n^2)\) inequalities. This formulation is also optimal in size as established by 
\cite{pashkovisch2009perm}; another example for a gap between the
best symmetric and asymmetric extension.

A more general framework for constructing
(asymmetric) extended formulations by, so called, polyhedral relations
was established in \cite{kaibel2011}. This quite general
method allowed to  recast several constructions of asymmetric
extended formulations (e.g., the \(O(n \log n)\) extended formulation
of the permutahedron) in a unified framework.

\subsection*{Contribution}
We will focus on
\emph{symmetric extended formulations} in this article.
We streamline and extend the lower bound estimation technique of
\cite{yannakakis1991expressing} via algebraic arguments
with the main structure being a group action
expressing the symmetries.

The results of the
algebraic recasting are two compact theorems (Theorem~\ref{th:1}
for general symmetric extended formulations and
Theorem~\ref{thm:superLinLB} for super-linear bounds),
which virtually encapsulate all the necessary
polyhedral and algebraic arguments in black boxes and which provide a
uniform view on symmetric extended formulations. From these black
boxes many known results follow naturally and shortly
(e.g., those in \cite{kaibel2010symmetry}, 
\cite{pashkovisch2009perm}).

We stress that we
do not provide any new or stronger lower bounds but rather
a natural algebraic approach to symmetric extensions as a
different perspective of known results. We
believe that further insights into the
underlying mechanics of Yannakakis's approach can be obtained from
this framework and that the algebraic versions are more amendable to
SDP extensions.
As an indication we formulate Theorem~\ref{thm:1SDP}.
However, we were
unable to derive new lower bounds for SDP extensions. 

As part of streamlining, several technical concepts needed in previous
works could be omitted:
for example an intermediate extension that
has only vertices in \(\binSet\) or indexed families
or partitions compatible with sections.
Moreover, some restrictions were relaxed at no cost:
e.g., the group action can be any affine action and not just
coordinate permutation.

In the process of reformulating the technique we also
obtain several \emph{unnecessary generalizations}, i.e.,
generalizations that do provide further insight into the essence of
the problem but do not lead to stronger lower
bounds.

\subsection*{Outline}
We start with some preliminaries in
Section~\ref{sec:preliminaries} and recall the considered polytopes in
Section~\ref{sec:considered-polytopes}. In
Section~\ref{sec:polytope-pa_n} we study the well-known polytope
\(\pA_n\), which is of special importance in the context of
cutting-planes and whose face lattice is close to that of the parity
polytope.  Then we derive the main
theorem on lower bounds in
Section~\ref{sec:establ-lower-bounds} and reprove Yannakakis's lower
bound for the matching polytope. Next, we conduct a more detailed
analysis of polytopes with small extensions in
Section~\ref{sec:establ-super-line}. We provide significantly
shortened proofs for the lower bounds on the symmetric extension
complexity of the permutahedron and the cardinality indicating
polytope. In Section~\ref{sec:sdp-version-theorem} we provide an SDP
version for one of our main theorems (Theorem~\ref{th:1}).

\section{Preliminaries}
\label{sec:preliminaries}
In the following we briefly recall a few algebraic notions.
As usual, we accompany formal definitions with commutative diagrams to
give a visual representation.
We write maps on the right except for the section map \(s\) for
reasons of readability.
Let \(\log(.)\) denote the logarithm to base \(2\).

\subsection{Symmetric extensions}
\label{sec:symmetric-extensions}

Let \(P \subseteq \R^{m}\) be a polytope.
Recall that an \emph{extension} of \(P\) is
a polytope \(Q \subseteq \R^{d}\) together with
a linear map \(p\colon \R^{d} \to \R^{m}\) satisfying \(Q p = P\). We
use standard notations for group actions as to be found, e.g., in \cite{dixon1996permutation}:
let the group \(G\) act on \(X\) and let \(g \in G\), \(x \in X\) be arbitrary elements.
The action of \(g\) on \(x\) is simply \(g x\); in particular
groups act on the left.

\begin{definition}
 Let \(G\) be a group with an affine group action on
 \(\R^m\). Then \(P \subseteq \R^m\) is a \emph{\(G\)-polytope} if
 \(G\) leaves \(P\) invariant, i.e., \(g P = P\) for all \(g \in
 G\). 
\end{definition}
The group \(G\) will usually be either
the \emph{symmetric group} \(\sg_n\) on \(n\) elements or
the \emph{alternating group} \(\ag_n\) on \(n\) elements.

We will work with symmetric extensions of a \(G\)-polytope
\(P\) defined as follows. 

\begin{definition}
\label{def:symExt}
  A \emph{symmetric extension} of a \(G\)-polytope \(P\)
  is an extension \(Q\) together with \(p \colon Q \to P\) where \(Q\)
  is a \(G\)-polytope 
  and \(p\) is \(G\)-invariant, i.e., \(g (xp) = (gx) p\) for all \(g
  \in G\) and \(x \in Q\). 
\end{definition}

In order to compare extended formulations we define the following
measure.

\begin{definition}
  \label{def:size}
Let \(Q\) be an extension of the polytope \(P\). Then
the \emph{size of \(Q\)} is the number of its facets. The size of the
smallest extension of \(P\) is denoted by \(\xc(P)\) and similarly  the size
of the smallest symmetric extension for a group \(G\)
is denoted by \(\xcs[G](P)\). 
\end{definition}

At first glance Definition~\ref{def:symExt} seems more restrictive than Yannakakis's
one.  However it turns out that Yannakakis's seemingly more
general definition (and also the generalization given in
\cite{kaibel2010symmetry}) does not lead to extended
formulations of smaller size, as we will see at the end of this section.

We further need the notion of a section which assigns to every vertex
in \(P\) a pre-image in \(Q\) under the projection \(p\).

\begin{definition}
  Let \(Q\) and \(P\) be
  \(G\)-polytopes  such that \(Q\) is a symmetric extension of
  \(P\). Then \(s \colon \vertex(P) \rightarrow Q\) is a \emph{section} if
  \(s(x)p = x\) for all \(x \in \vertex(P)\). Further it is an \emph{invariant
  section} if we additionally have \(s(gx) =
g s(x)\)  
for all \(x \in \vertex(P)\) and \(g \in G\). 
\end{definition}

Note that a section \(s\) is usually non-linear. In fact, as pointed
out in \cite{kaibel2010symmetry}, if \(s\) were affine and
\(Q\) an extension of \(P\), then
\(Q \cap \operatorname{aff} \set{s(x)}{x \in X}\) would be
isomorphic to \(P\). Therefore \(Q\) would have
at least as many facets as \(P\),
and so could not have size smaller than \(P\). 

Recall that
a scalar product \(\sprod{.,.}\) on \(\R^n\) is \emph{\(G\)-invariant}
if it is invariant under the linear part of the action of \(G\),
i.e., \(\sprod{gx - g0, gy - g0} = \sprod{x,y}\)
for all \(g \in G\) and \(x,y \in \R^n\).
(The linear part of the \(G\)-action is \(x \mapsto g x - g 0\).)

It is easy to see that there always exist an invariant scalar
product and an invariant section. In fact the invariant section as well as
the invariant scalar product arise from \emph{averaging} over the
group. The proof follows standard arguments; we include it for the sake of
completeness in Appendix~\ref{apx:invScalar}. 

\begin{lemma}
\label{lem:invariantScalarAndSection}
Let \(P \subseteq \R^m\) be a \(G\)-polytope and \(Q \subseteq R^d\) be a
\(G\)-polytope so that \(Q\) is a symmetric extension of
\(P\) with projection \(p\) as
before. Further let \(s: \vertex(P) \rightarrow Q\) be a section and
\(\sprod{.,.}\) be a scalar product on \(\R^d\). Then:
\begin{enumerate}
\item There exists an invariant scalar product \(\ssprod{.,.}\)
  defined (via averaging over the linear part) as
\(\ssprod{x,y} \coloneqq \frac{1}{\size{G}} \sum_{g \in G}
\sprod{g x - g 0, g y - g 0}\),
\item There exists an invariant section \(\bar s\) given by
\[\bar s(x) \coloneqq \frac{1}{\size{G}} \sum_{g \in G}
g^{-1} s(g x).\]
\end{enumerate}
\end{lemma}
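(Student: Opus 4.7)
The plan is the standard group-averaging argument, with a little extra care to handle the fact that $G$ acts affinely rather than linearly.

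For part~(1), I would first isolate the linear part $\tilde g(x) \coloneqq gx - g0$ of each $g \in G$. Writing $gy = \ell_g(y) + g0$ and composing, a short calculation shows $\widetilde{gh} = \tilde g \circ \tilde h$, so $g \mapsto \tilde g$ is a genuine linear representation of $G$. Then $\ssprod{x,y} = \frac{1}{\size{G}}\sum_g \sprod{\tilde g(x), \tilde g(y)}$, and the scalar-product axioms follow: bilinearity and symmetry descend from $\sprod{\cdot,\cdot}$ and the linearity of each $\tilde g$, while positive-definiteness holds because the $g = e$ summand alone contributes $\sprod{x,x}$. For invariance under the linear part, reindex $g \mapsto gh$:
\[
  \ssprod{\tilde h x, \tilde h y}
  = \frac{1}{\size{G}} \sum_g \sprod{\widetilde{gh}(x), \widetilde{gh}(y)}
  = \ssprod{x, y}.
\]

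For part~(2), three properties of $\bar s$ need checking. (a)~$\bar s(x) \in Q$: each $s(gx)$ lies in $Q$, so by $G$-invariance of $Q$ does $g^{-1} s(gx)$, and the average is a convex combination. (b)~$\bar s$ is a section: linearity of $p$ combined with its $G$-equivariance $(gy)p = g(yp)$ yields
\[
  \bar s(x)\, p
  = \frac{1}{\size{G}} \sum_g g^{-1}\bigl(s(gx)\, p\bigr)
  = \frac{1}{\size{G}} \sum_g g^{-1}(gx) = x.
\]
(c)~$\bar s$ is invariant: reindexing $g' = gh$,
\[
  \bar s(hx)
  = \frac{1}{\size{G}} \sum_g g^{-1} s(ghx)
  = \frac{1}{\size{G}} \sum_{g'} h g'^{-1} s(g'x)
  = h\, \bar s(x).
\]

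The only non-routine point is the affineness of the action. It forces part~(1) to be phrased via the linear parts $\tilde g$ (otherwise the averaged expression would not even be bilinear), and in step~(c) of part~(2) the pull-through $h\bigl(\sum \alpha_i y_i\bigr) = \sum \alpha_i (h y_i)$ is legitimate only because the coefficients $\alpha_i = 1/\size{G}$ sum to~$1$---affine maps commute with affine combinations but not with arbitrary linear combinations. Aside from this bookkeeping, both halves are routine group averaging.
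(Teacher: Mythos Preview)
Your proposal is correct and follows essentially the same group-averaging approach as the paper's own proof. If anything, you are slightly more careful: you explicitly verify that $\bar s(x)\in Q$ (as a convex combination of points of $Q$) and you flag the affine subtlety that pulling $h$ or $p$ through the sum is legitimate only because the coefficients $1/\size{G}$ sum to~$1$; the paper's appendix handles the affineness more implicitly by passing without loss of generality to the linear part ($g0=0$) at the outset.
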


The essence of the proof is the celebrated symmetrizing trick.

\subsection{Group actions}
\label{sec:group-actions}

Let \(G\) act on a set \(X\).
Recall, that the \emph{orbit} of an
element \(x \in X\) under \(G\) is defined as \(\orb{G}{x}
\coloneqq \set{\pi x}{\pi \in G}\). The \emph{stabilizer} of an
element \(x \in P\) is the subgroup of elements of \(G\) that leave \(x\)
invariant, i.e., \(\Stab{G}{x} \coloneqq \set{\pi \in G}{\pi x =
  x}\).
Recall the following well-known formula for the size of orbits:

\begin{lemma}[Orbit-Stabilizer Theorem]
\label{lem:stabOrbThm}
Let \(G\) be a finite group acting on a finite set \(X\).
For any \(x \in X\)
we have
 \[\card{\orb{G}{x}} = \card{G:\Stab{G}{x}} = \card{G} / \card{\Stab{G}{x}}.\]
\end{lemma}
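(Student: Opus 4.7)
The plan is to exhibit the orbit as a quotient of \(G\) by the stabilizer, following the standard textbook argument. First, I would define the evaluation map \(\phi \colon G \to \orb{G}{x}\) by \(\phi(g) \coloneqq gx\), which is surjective by the very definition of the orbit.

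Next, I would analyze the fibers of \(\phi\). Since the action is on the left, for \(g, h \in G\) the equality \(gx = hx\) is equivalent to \(g^{-1}h x = x\), i.e., \(g^{-1}h \in \Stab{G}{x}\), which in turn is equivalent to \(h \in g \Stab{G}{x}\). Hence the fibers of \(\phi\) are precisely the left cosets of \(\Stab{G}{x}\) in \(G\), and \(\phi\) descends to a well-defined bijection between the left coset space \(G/\Stab{G}{x}\) and the orbit \(\orb{G}{x}\). In particular \(\card{\orb{G}{x}} = \card{G:\Stab{G}{x}}\), which is the first equality.

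Finally, I would apply Lagrange's theorem (equivalently, observe that left multiplication by \(g\) is a bijection \(\Stab{G}{x} \to g\Stab{G}{x}\), so every left coset has cardinality \(\card{\Stab{G}{x}}\), and the \(\card{G:\Stab{G}{x}}\) cosets partition \(G\)). This gives \(\card{G} = \card{G:\Stab{G}{x}} \cdot \card{\Stab{G}{x}}\), and hence the second equality \(\card{G:\Stab{G}{x}} = \card{G}/\card{\Stab{G}{x}}\).

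There is no genuine obstacle: the result is classical and the argument is entirely routine bookkeeping once one notices that the orbit map \(g \mapsto gx\) has stabilizer cosets as its fibers. The only point requiring a moment of care is the orientation convention: because the paper lets \(G\) act on the left, it is the \emph{left} cosets of \(\Stab{G}{x}\) that appear as fibers, which is exactly what the index notation \(\card{G:\Stab{G}{x}}\) denotes.
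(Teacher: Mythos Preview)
Your argument is correct and is exactly the standard textbook proof. The paper itself does not prove this lemma at all; it is stated without proof as a well-known fact, so there is nothing to compare against.
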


In particular, if \(P\) is a \(G\)-polytope, then \(G\) also acts on
the face lattice of \(P\).
We will be interested in the orbits and stabilizers of faces,
for which the following observation and lemma will be helpful.
The observation is just a corollary to Lemma~\ref{lem:stabOrbThm}.

\begin{observation}
\label{obs:facetsToStab}
Let \(P\) be a \(G\)-polytope with \(d\) facets. Then
\[\card{G:\Stab{G}{j}} \leq d\]
for any facet \(j\) of \(P\).
\end{observation}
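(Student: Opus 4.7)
The plan is to apply the Orbit-Stabilizer Theorem to the action of \(G\) on the set of facets of \(P\). First I would observe that, since \(P\) is a \(G\)-polytope and the action of \(G\) on \(\R^m\) is affine, \(G\) sends faces of \(P\) to faces of \(P\) of the same dimension; in particular it permutes the (finite) set \(F\) of facets of \(P\), giving a well-defined action of \(G\) on \(F\).

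Next, for the fixed facet \(j\), I would apply Lemma~\ref{lem:stabOrbThm} to this action to get
\[\card{\orb{G}{j}} = \card{G:\Stab{G}{j}}.\]
Since \(\orb{G}{j} \subseteq F\) and \(\card{F} = d\), the inequality \(\card{G:\Stab{G}{j}} \leq d\) follows immediately.

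There is no real obstacle here: the only thing to check carefully is that \(G\) actually acts on the facets, which is immediate from the fact that an affine \(G\)-action on \(\R^m\) preserving \(P\) permutes its faces (it preserves the face lattice). The statement is then just a one-line consequence of Orbit-Stabilizer, which is exactly why the authors state it as an observation rather than a lemma.
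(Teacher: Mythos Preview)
Your proposal is correct and follows exactly the approach indicated in the paper: the observation is stated as an immediate corollary to Lemma~\ref{lem:stabOrbThm} (Orbit--Stabilizer), applied to the action of \(G\) on the facets of \(P\). There is nothing to add.
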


For a finite set \(Y \subseteq X\), we define \(\ag(Y)\) to be the alternating group
permuting the elements of \(Y\) and leaving \(X \setminus Y\) fixed;
the ambient set will be clear from the context.

\begin{lemma}\citep[Theorem 5.2A]{dixon1996permutation}
\label{lem:smallIndex}
 Let \(G \subseteq \ag_n\) and \(n \geq 10\). Then
\( \card{\ag_n : G} < \binom{n}{k}\) with \( k \leq \frac{n}{2}\)
implies one of the following
\begin{enumerate}
\item there is an invariant subset \(W\) with \(\card{W} < k\) such
  that \(\ag([n]\setminus W)\) is a subgroup of \(G\);
\item  \(\card{\ag_n : G} = \frac{1}{2} \binom{n}{n/2}\) with \(n\) even,
\(\ag_{n/2} \times \ag_{n/2}\) is a subgroup of \(G\), and \(k=n/2\).
\end{enumerate}
\end{lemma}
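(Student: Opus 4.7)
Since this is a classical result of Dixon and Mortimer, I would sketch an approach rather than attempt a self-contained reconstruction. The plan is to analyse the action of \(G\) on \([n]\), isolate a small \(G\)-invariant subset \(W\) outside of which \(G\) acts as the full alternating group, and then induct on \(n\).

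Set \(m \coloneqq \card{\ag_n : G}\). For any \(G\)-invariant \(V \subseteq [n]\) with \(\card{V} \leq n/2\), the setwise stabiliser of \(V\) in \(\ag_n\) contains \(G\) and has index \(\binom{n}{\card{V}}\) in \(\ag_n\); hence \(m < \binom{n}{k}\) forces \(\card{V} < k\). Choose \(W\) as a \(G\)-invariant subset minimal among those for which \(\ag([n] \setminus W) \subseteq G\) is plausible---concretely, the union of the \(G\)-orbits on which \(G\) fails to act as the full symmetric or alternating group---so that \(G\) acts as \(\ag(Y)\) or \(\operatorname{Sym}(Y)\) on \(Y \coloneqq [n] \setminus W\). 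The preceding index bound gives \(\card{W} < k\) (after passing to complements when needed).

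Let \(G_Y\) denote the pointwise stabiliser of \(W\) in \(G\), a normal subgroup of \(G\) by the \(G\)-invariance of \(W\). Since every element of \(G\) lies in \(\ag_n\) and \(G_Y\) acts trivially on \(W\), the sign condition forces \(G_Y \subseteq \ag(Y)\). The identity \(\card{\ag_n : \ag(Y)} = \card{W}! \cdot \binom{n}{\card{W}}\) together with \(\card{G : G_Y} \leq \card{W}!\) yields
\[\card{\ag(Y) : G_Y} \;\leq\; \frac{m}{\binom{n}{\card{W}}} \;<\; \frac{\binom{n}{k}}{\binom{n}{\card{W}}} \;\leq\; \binom{\card{Y}}{k'},\]
with \(k' \coloneqq k - \card{W} \leq \card{Y}/2\). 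Applying the lemma inductively to \(G_Y \subseteq \ag(Y)\) with parameter \(k'\) forces either \(G_Y = \ag(Y)\), which together with the minimality of \(W\) yields case~(1) of our lemma, or lands in case~(2) of the induction.

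The main obstacle is the interplay between the outer choice of \(W\) and any smaller \(W' \subseteq Y\) produced by case~(1) of the induction: to keep the final \(W\) both \(G\)-invariant and of size \(< k\), one must enlarge \(W\) by the \(G\)-orbit of \(W'\) and iterate the argument, using that \(G\) normalises \(G_Y\). Equally delicate is excluding sporadic low-degree transitive subgroups of small alternating groups (the Mathieu groups \(M_{11}, M_{12}\) and various \(\mathrm{PSL}\)-actions on fewer than \(10\) points), which is precisely what the hypothesis \(n \geq 10\) secures. Finally, case~(2) of the lemma arises when the inductive case~(2) triggers at the top level with \(W = \emptyset\), yielding \(\ag_{n/2} \times \ag_{n/2} \subseteq G\) with index \(\tfrac{1}{2}\binom{n}{n/2}\).
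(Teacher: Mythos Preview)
The paper does not prove this lemma at all; it is quoted from Dixon--Mortimer (Theorem~5.2A) and used throughout as a black box. There is therefore no in-paper argument to compare your sketch against.

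On its own merits, your outline has a genuine gap at the decisive step: the construction of $W$. Declaring $W$ to be ``the union of the $G$-orbits on which $G$ fails to act as the full symmetric or alternating group'' is not well-posed --- $G$ need not act faithfully on a single orbit, and even if the image of $G$ on every orbit were a full symmetric or alternating group, that would not make the image of $G$ on the union $Y$ equal to $\ag(Y)$ or $\sg(Y)$. Nothing you establish guarantees, a priori, either $\card{W} < k$ or $\ag([n]\setminus W) \subseteq G$ for the $W$ you describe. The textbook proof proceeds quite differently: it splits into the cases $G$ intransitive, $G$ transitive but imprimitive, and $G$ primitive, handling the first two by elementary orbit and block counting and the third via lower bounds on the index of a proper primitive subgroup of $\ag_n$; the hypothesis $n \geq 10$ enters precisely in the primitive case to exclude a short explicit list of exceptions. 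Your inductive scheme also risks descending to $\card{Y} < 10$ (since $\card{Y}$ can be as small as $n - k + 1 \geq n/2 + 1$), a base case you do not treat.
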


Note that one can obtain a strengthened version of
Lemma~\ref{lem:smallIndex} by iteratively applying
it to the obtained subgroup.

\subsection{Weakly symmetric extensions}
\label{sec:weakly-symm-extens}

We conclude this section by showing how Yannakakis's concepts fit
into our framework. For this we will use the concept of a
weakly-symmetric extension, which had been used before in
\cite{kaibel2010symmetry}. 
We will show that every weakly-symmetric extension (a generalization
of, both, 
our symmetric extensions and Yannakakis's one) induces
a symmetric one of at most the same size. Therefore weakly-symmetric
extensions do not provide smaller extended formulations and we maintain full
generality by confining ourselves to symmetric extensions while being
able to simplify arguments.

\begin{definition}
A \emph{weakly-symmetric extension} of a \(G\)-polytope \(P\)
is a \(\widetilde{G}\)-polytope \(Q\) together with
a group epimorphism \(\alpha\colon \widetilde{G} \to G\)
and a surjective \(\alpha\)-linear affine map \(p \colon Q \to P\),
i.e., \((\tilde{\pi} x) p =
(\tilde{\pi} \alpha) (x p)\) for all \(\tilde{\pi} \in \widetilde G\)
and \(x \in Q\).

In fact we have the following commutative diagram for all \(\tilde \pi
\in \widetilde{G}\):

\begin{equation*}
\begin{CD}
Q @>\tilde \pi>> Q\\
@VVpV @VVpV\\
P @>\tilde \pi \alpha>> P
\end{CD}    
\end{equation*}
\end{definition}

We now show that weakly-symmetric extended formulations
do not provide smaller formulations than symmetric extended
formulations:

\begin{proposition} 
For every weakly-symmetric extended
formulation \(Q\) of \(P\) with \(Q\subseteq \R^d\) being a \(\tG\)-polytope,
\(P \subseteq \R^m\) being a \(G\)-polytope, projection \(p \colon Q \to P\), and
group epimorphism \(\alpha \colon \tG \to G\),
the restriction to \(R \coloneqq \fixedPoints{\ker \alpha}{Q}\) is
a symmetric extended formulation and \(R\) has dimension and facets at most that of \(Q\).  
\begin{proof}
As \(\ker \alpha\) is a normal subgroup,
\(R = \fixedPoints{\ker \alpha}{Q}\) and
\(X \coloneqq \fixedPoints{\ker \alpha}{(\R^{d})}\)
are invariant under the \(\tG\)-action.
Since the action is affine, \(X\) is an affine subspace.
Thus \(R\) is the intersection of \(Q\)
with the affine subspace \(X\),
and hence it has no higher dimension and no more facets
than \(Q\).

To make \(R\) a \(G\)-polytope,
we define the action of \(g \in G\)
on an element \(x \in R\) via
\[gx \coloneqq \tg x, \quad \tg \alpha = g\]
where \(\tg \in \tG\) is arbitrary so that \(\tg \alpha = g\) holds.
This action is well-defined,
because \(\ker \alpha\) acts trivially on \(R\) by definition,
i.e., whenever \(\tg
\in \ker \alpha\), then \(\tg x = x\) for all \(x \in R\).

It is obvious that the restriction \(p \colon R \to Q\)
preserves the \(G\)-action.
Finally we show that \(R p = P\).
Let \(x \in P\) be arbitrary and choose any
\(y \in Q\) so that \(yp = x\).
As a shorthand notation,
let \(y[H] \coloneqq \frac{1}{\card{H}}\sum_{h \in H} hy\) denote
the group average of \(y\) with respect to any group \(H\).
Then \(y[\ker \alpha] \in R\) and we
have
\[(y[\ker \alpha]) p = (yp) [(\ker \alpha) \alpha] = yp = x,\]
and so the claim follows.
\end{proof}
\end{proposition}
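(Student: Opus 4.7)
The plan is to let $R$ be the fixed-point set $\fixedPoints{\ker\alpha}{Q}$ and check that this set, equipped with the restricted projection, satisfies all the requirements of a symmetric extension. The idea behind this choice is that points fixed by $\ker\alpha$ are exactly the ones on which the $G$-action can be defined unambiguously via any lift under $\alpha$, and normality of $\ker\alpha$ in $\tG$ will ensure that $R$ is preserved by the full $\tG$-action.

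The first step is polyhedral. I would observe that since the $\tG$-action on $\R^d$ is affine, the fixed locus $X \coloneqq \fixedPoints{\ker\alpha}{(\R^d)}$ is cut out by finitely many affine equations (one for each generator of $\ker\alpha$), hence is an affine subspace. Then $R = Q \cap X$, and since intersecting a polytope with an affine subspace can only decrease (or preserve) the dimension and can only produce facets from restrictions of the ambient facets, $R$ has at most $\dim Q$ dimensions and at most as many facets as $Q$.

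Next I would set up the $G$-action on $R$. For $g \in G$, pick any $\tg \in \tG$ with $\tg\alpha = g$, and define $gx \coloneqq \tg x$ for $x \in R$. Two lifts of $g$ differ by an element of $\ker\alpha$, which fixes $R$ pointwise by definition, so this action is well-defined; the group axioms follow from those in $\tG$. Because $\ker\alpha$ is normal, $\tG$ preserves $R$, so we really do get a $G$-action rather than just a partial one. The projection $p$ restricted to $R$ is $G$-equivariant because $\alpha$-linearity of the original $p$ reads $(\tg x) p = (\tg\alpha)(xp) = g(xp)$.

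The last and main step is to verify surjectivity $Rp = P$, since a priori the fixed-point set might project to a proper subset of $P$. Here I would use the averaging trick that already appeared in Lemma~\ref{lem:invariantScalarAndSection}: given $x \in P$, pick any $y \in Q$ with $yp = x$, and form the average $\bar y \coloneqq \frac{1}{\card{\ker\alpha}}\sum_{k \in \ker\alpha} ky$. Convexity of $Q$ keeps $\bar y$ in $Q$, and by construction $\bar y$ is fixed by $\ker\alpha$, so $\bar y \in R$. Applying $p$ and using $\alpha$-linearity together with the fact that every element of $\ker\alpha$ projects to the identity in $G$ gives $\bar y p = yp = x$. This is the delicate step, since it is the only place that actually uses both the affine structure of $Q$ and the homomorphism property of $\alpha$; I expect it to be the main obstacle, because one needs to check simultaneously that the average lands in the fixed locus and that it still projects to the prescribed point.
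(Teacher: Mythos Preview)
Your proposal is correct and follows essentially the same route as the paper: identify $R=Q\cap\fixedPoints{\ker\alpha}{(\R^d)}$ as the intersection of $Q$ with an affine subspace (bounding dimension and facets), define the $G$-action on $R$ by lifting along $\alpha$ and using that $\ker\alpha$ fixes $R$ pointwise (with normality guaranteeing $\tG$-invariance of $R$), and recover surjectivity by averaging a preimage over $\ker\alpha$. The paper's proof is structured identically, including the averaging step for $Rp=P$.
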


\section{Considered polytopes}
\label{sec:considered-polytopes}
In this section we recall the well-known polytopes that will appear
later.

\subsection{The cardinality indicating polytope}
\label{sec:card-indic-polyt}
The \emph{cardinality indicating polytope} \(P_{card}(n)\) is the
convex hull of all vectors \((x,e_{\onorm{x}})\)
for \(x \in \binSet^n\)
where \(e_{0},\dots,e_{n}\) are linearly independent.
The second vector \(e_{\onorm{x}}\) indicates
the number of \(1\)-entries in \(x\).
\[ P_{card}(n) \coloneqq \conv \set{(x,e_{\onorm{x}})}{x \in \binSet^n}\]
It can be described by the following system
of inequalities (with \( z = \sum_{j=0}^{n} z_{j} e_{j}\)):
\begin{align*}
\sum_{i \in S} x_i & \leq   \sum_{j=0}^{\size{S}} j z_{j} +
\size{S} \sum_{j = \size{S} + 1]}^{n} z_j & \forall\;
\emptyset \nsubseteq S \subseteq [n]\\
\sum_{i \in [n]} x_i &=   \sum_{j = 0}^{n} j z_{j} \\
\sum_{j = 0}^{n} z_j &= 1 \\
 x_{i},z_j &\in [0,1] & \forall\; i \in [n], j = 0, \dots, n
\end{align*}
The cardinality indicating polytope has a symmetric extended
formulation of size \(\Theta(n^2)\) as shown in \cite{koeppe2008intermediate}.

\subsection{The Birkhoff polytope}
The \emph{Birkhoff polytope} \(P_{birk}(n)\) is the convex hull of all doubly
stochastic \(n \times n\) matrices (or equivalently of all \(n \times
n\) permutation matrices). It can be described by the following system
of inequalities:
\begin{align*}
  \sum_{i \in [n]} x_{ij} & = 1  & \forall\;  j \in [n]\\
  \sum_{j \in [n]} x_{ij} & = 1  & \forall\;  i \in [n] \\
  x_{ij} &\in [0,1] & \forall\; i,j \in [n]
\end{align*}

\subsection{The permutahedron}
\label{sec:permutahedron}
The \emph{permutahedron} \(P_{perm}(n)\) is the convex hull of all
permutations of the numbers \(1, \dots, n\), i.e.,
\[P_{perm}(n) \coloneqq \conv \set{\pi(1,\dots,n)}{\pi \in \sg_n}.\]
It can be
described by the following system of inequalities:
  \begin{align*}
\sum_{i \in S} x_i &\geq \frac{\size{S} (\size{S}+1)}{2} & \forall\;
\emptyset \neq S \subseteq [n] \\
\sum_{i \in [n]} x_i &= \frac{n (n+1)}{2}
\end{align*}
and it can be obtained by a projection of the Birkhoff
polytope, i.e., it has a symmetric extended formulation of size
\(O(n^2)\). Also, symmetric extended formulation of the permutahedron
needs at least \(\Omega(n^2)\) inequalities by \cite{pashkovisch2009perm} and so the
Birkhoff polytope is an optimal extension. On the other hand there
exists an asymmetric extended formulation of the permutahedron of size
\(O(n \log n)\) by \cite{goemans2009cfp} which is optimal.

\subsection{The spanning tree polytope}

 For a graph \(G = (V,E)\) and \(U \subseteq V\) let \(E[U]\) denote
the set of edges supported on \(U\). The \emph{spanning
  tree polytope of \(G\) (denoted by: \(P_{STP}(G))\)} is given by the following system of inequalities:
  \begin{align*}
\sum_{e \in E[U]} x_e & \leq \size{U} - 1     &\forall\;  \emptyset
&\neq U \subsetneq V\\
\sum_{e \in E} x_e &= n-1\\
x_e &\in [0,1] & \forall\; e &\in E.
  \end{align*}

There exists an extended formulation of size \(O(n^3)\) due to
\cite{martin1991using} and a lower bound of \(\Omega(n^2)\) follows from
the non-negativity constraints. An interpretation of the associated communication protocol
can be found in \cite{combBoundsNNR}.


\section{The polytope \(\pA_n\)}
\label{sec:polytope-pa_n} 
In the following we consider the well-known polytope \(\pA_n\), which is of
particular interest in the context of cutting-plane procedures. It
realizes maximal rank for all known operators and it represents a
universal obstruction for any admissible cutting-plane procedure (see
\cite{PS20091}). Moreover \(\pA_n\) will serve as an important example
showing that the conditions of Theorem~\ref{thm:orbitCharSuperlinear}
are necessary. The polytope \(\pA_n\) is given by 
\[\pA_n \coloneqq \set{x \in \cube{n}}{\sum_{i \in I} x_i +
  \sum_{i \notin I} (1-x_i) \geq \frac{1}{2} \quad \forall I
  \subseteq [n]}.\]
With \(F_1^n \coloneqq \set{x \in \face{0,1/2,1}^n}{\text{exactly
one entry equal to } 1/2}\) we have \(\pA_n = \conv F_1^n\) (see
e.g., \cite{PS20093}); we drop the index \(n\) if it is clear from the
context. For a vector \(v \in F_1\) let \(\supp_i(v)
\coloneqq \set{j \in [n]}{v_j = i}\). 

We provide a symmetric extended formulation of \(\pA_n\) of size
\(O(n)\).

\begin{theorem}
\label{thm:AnSmallSym}
  Let \(\pA_n\) be defined as above. Then there exists a symmetric
  extended formulation of \(\pA_n\) of size \(O(n)\).
  \begin{proof}
For convenience we translate \(\pA_n\) to \(Q_n:=\pA_n-\frac{1}{2}e\) and
we will provide an extended formulation of \(Q_n\) with \(3n\)
inequalities and \(2n\) variables. Observe that 
\[Q_n \coloneqq \set{x \in
  {\left[ -\frac{1}{2}, \frac{1}{2}\right]}^n}{\size{x_i}
  \leq \frac{1}{2}, \sum_{i \in [n]} \size{x_i} = \frac{n-1}{2} \quad
  \forall i \in
[n]}.\]
While this formulation is polyhedral it is not given by inequalities. However we can
introduce new variables \(y_i\) and \(z_i\) with \(i \in [n]\) and
replace \(\size{x_i}\) with \(y_i + z_i\)
and we obtain a new polytope \(L_n\)
\[L_n \coloneqq \set{(y,z) \in {\left[ 0,\frac{1}{2} \right]}^{2n}}%
{y_i + z_i
  \leq \frac{1}{2}, \sum_{i \in [n]} y_i + z_i = \frac{n-1}{2} \quad
  \forall i \in
[n]}.\]
Observe that \(L_n\) is given by \(3n\) inequalities (\(n\) in the
formulation and \(y_i,z_i \geq 0\) for all \(i \in [n]\)) and \(2n\)
variables. Moreover we claim that with the projection \(p\) defined
via \((y_i,z_i) \mapsto x_{i} = y_i-z_i\) for all \(i \in [n]\) we have
\(p(L_n) = Q_n\). Clearly \(Q_n \subseteq p(L_n)\). For the inverse
inclusion observe that a vertex of \(L_n\) can have only
\(\face{0,1/2}\)-entries. 
  \end{proof}
\end{theorem}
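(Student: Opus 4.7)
The plan is to exploit the structure of the vertices of \(\pA_{n}\) by first translating so that it becomes centrally symmetric, and then to linearise an absolute-value description of the translated polytope. Concretely, I would set \(Q_{n} \coloneqq \pA_{n} - \tfrac{1}{2}e\); using \(\pA_{n} = \conv F_{1}^{n}\), the vertices of \(Q_{n}\) are exactly the vectors in \(\face{-1/2,0,1/2}^{n}\) having exactly one zero coordinate. A direct consequence is the closed-form description
\[Q_{n} = \set{x \in [-\tfrac{1}{2},\tfrac{1}{2}]^{n}}{\textstyle\sum_{i \in [n]} \card{x_{i}} = \tfrac{n-1}{2}},\]
which I would verify by checking both inclusions against the vertex set above.

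The next step is to eliminate the absolute values by the standard positive/negative decomposition. I introduce variables \(y_{i}, z_{i} \geq 0\) together with \(x_{i} = y_{i} - z_{i}\), and impose \(y_{i} + z_{i} \leq \tfrac{1}{2}\) for each \(i\) plus the budget \(\sum_{i}(y_{i} + z_{i}) = \tfrac{n-1}{2}\). This yields a lifted polytope \(L_{n} \subseteq \R^{2n}\) described by \(2n\) non-negativity constraints and \(n\) capacity constraints (so \(3n\) inequalities in total), plus one equality; the upper bounds \(y_{i}, z_{i} \leq \tfrac{1}{2}\) are implied. The projection is \(p\colon (y,z) \mapsto y - z\).

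The main point to check is \(L_{n} p = Q_{n}\), and I expect this to be the one nontrivial step. The inclusion \(Q_{n} \subseteq L_{n} p\) is immediate: given \(x \in Q_{n}\), set \(y_{i} \coloneqq \max(x_{i},0)\) and \(z_{i} \coloneqq \max(-x_{i},0)\). For the reverse direction it suffices to map vertices to vertices. Counting tight constraints at a vertex of \(L_{n}\) (of which one is the equality, leaving \(2n-1\) coordinate constraints to be tight), together with the budget \(\sum_{i}(y_{i}+z_{i}) = \tfrac{n-1}{2}\) and \(y_{i} + z_{i} \leq \tfrac{1}{2}\), should force \((y_{i}, z_{i}) \in \face{(0,0), (\tfrac{1}{2},0), (0,\tfrac{1}{2})}\) with exactly one index satisfying \((y_{i},z_{i}) = (0,0)\). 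Applying \(p\) then produces exactly a vertex of \(Q_{n}\).

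Finally, the symmetry required by Definition~\ref{def:symExt} comes for free: the symmetric group \(\sg_{n}\) acts on \(L_{n}\) by permuting the coordinate pairs \((y_{i}, z_{i})\), the projection \(p\) is \(\sg_{n}\)-equivariant, and the translation by \(\tfrac{1}{2}e\) is \(\sg_{n}\)-invariant. Composing \(p\) with this translation therefore gives a symmetric extended formulation of \(\pA_{n}\) with \(3n = O(n)\) facets.
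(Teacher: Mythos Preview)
Your proposal is correct and follows essentially the same route as the paper: translate by \(\tfrac{1}{2}e\), describe \(Q_{n}\) via \(\sum_{i}\lvert x_{i}\rvert = \tfrac{n-1}{2}\), linearise the absolute values by \(x_{i}=y_{i}-z_{i}\), and check \(p(L_{n})=Q_{n}\) by analysing the vertices of \(L_{n}\). You supply a bit more detail than the paper (the explicit lift \(y_{i}=\max(x_{i},0)\), \(z_{i}=\max(-x_{i},0)\) and the tight-constraint count), and you might note in addition that the sign-flip symmetries \(x_{i}\mapsto -x_{i}\) lift to \(y_{i}\leftrightarrow z_{i}\), so the extension is in fact symmetric for the full group \(\Z_{2}\wr \sg_{n}\) and not only \(\sg_{n}\).
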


A larger compact extended formulation of size \(O(n^2)\) can be obtained using
Balas's \emph{union of polyhedra} (see \cite{balas85} and
\cite{balas1998dpp}). This formulation only preserves the symmetries
permuting coordinates, however
our extension in Theorem~\ref{thm:AnSmallSym}
preserves the full symmetry group \(Z_2 \wr \sg_n\) of the cube.

We will now derive a lower bound on the extension complexity of
\(\pA_n\). 

\begin{lemma}\cite[Theorem 1]{goemans2009cfp}
\label{lem:genLBextcomp}
Let \(P\)  be any polyhedron in \(\R^n\) with \(v(P)\) vertices. Then
the number of facets \(t(Q)\) of any extended formulation \(Q\) of
\(P\) satisfies 
\[t(Q) \geq \log(v(P)).\]
\end{lemma}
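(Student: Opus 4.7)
The plan is to count vertices of $P$ by injecting them into the face lattice of $Q$ and then bounding the total number of faces of $Q$ by $2^{t(Q)}$.

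First I would show that every vertex $v$ of $P$ gives rise to a nonempty face $F_v \coloneqq p^{-1}(v) \cap Q$ of $Q$. This is standard: pick a linear functional $c$ on $\R^n$ that is uniquely maximized on $P$ at $v$; then the pulled-back functional $c \circ p$ is a linear functional on $\R^d$ whose maximum over $Q$ is attained exactly on $F_v$, so $F_v$ is a face of $Q$. Since $p(Q) = P$, the set $F_v$ is nonempty, and for two distinct vertices $v \neq w$ of $P$ the faces $F_v$ and $F_w$ are disjoint because their images under $p$ are the distinct singletons $\{v\}$ and $\{w\}$.

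Next I would invoke the elementary fact that any polytope with $t$ facets has at most $2^t$ faces. The reason is that every face of a polytope equals the intersection of the facets containing it, so the map sending a face to the set of facets containing it injects the face lattice into the power set of the facet set. Applying this to $Q$ gives at most $2^{t(Q)}$ faces, hence at most $2^{t(Q)}$ nonempty ones.

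Combining the two steps, the assignment $v \mapsto F_v$ is an injection from the vertex set of $P$ into the (nonempty) face lattice of $Q$, so
\[
v(P) \leq 2^{t(Q)},
\]
and taking logarithms yields $t(Q) \geq \log v(P)$, as claimed. No step is really an obstacle here; the only mildly delicate point is ensuring that $F_v$ is genuinely a face (and nonempty) for each vertex $v$, which is handled by the supporting-hyperplane argument above together with surjectivity of the projection $p$.
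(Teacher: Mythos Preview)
Your argument is correct and is essentially the standard proof due to Goemans. Note that the paper itself does not prove this lemma; it simply cites \cite[Theorem~1]{goemans2009cfp}, so there is no ``paper's own proof'' to compare against beyond the original reference. Your two-step approach (inject vertices of $P$ into nonempty faces of $Q$ via preimages, then bound the number of faces of $Q$ by $2^{t(Q)}$ using that every face of a polytope is the intersection of the facets containing it) is exactly Goemans' argument.

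One small remark for completeness: the step ``every face equals the intersection of the facets containing it'' is a property of \emph{polytopes}; in this paper's setup $Q$ is explicitly required to be a polytope (see the definition of extension in Section~\ref{sec:symmetric-extensions}), so this is fine, but if one wanted the lemma for general polyhedral extensions the argument would need a minor adjustment.
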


Using Lemma~\ref{lem:genLBextcomp} we obtain the following lower bound
on the extension complexity of \(\pA_n\).

\begin{lemma}
\label{lem:lbAnGeneric}
  Let \(\pA_n\) be defined as above. Then \(\xc(\pA_n) \in
  \Omega(n)\). 
  \begin{proof}
    Observe that \(\size{F_1} = n 2^{n-1}\) and thus by
    Lemma~\ref{lem:genLBextcomp} we obtain \(\xc(\pA_n) \geq \log(n) +
    (n-1) \in \Omega(n)\). 
  \end{proof}
\end{lemma}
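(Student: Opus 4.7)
The plan is to apply Lemma~\ref{lem:genLBextcomp} directly, so the only real work is to count the vertices of \(\pA_n\) and then take a logarithm. Since we are told that \(\pA_n = \conv F_1^n\), it suffices to show that every element of \(F_1^n\) is in fact a vertex (and not a redundant point in the convex hull), and to count the size of \(F_1^n\).

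For the first point, I would argue that each \(v \in F_1^n\) is extremal as follows: the unique coordinate \(j\) of \(v\) with \(v_j = 1/2\) lies strictly in the interior of \([0,1]\), while all other coordinates lie in \(\{0,1\}\), so \(v\) lies in the relative interior of exactly one edge of the cube \([0,1]^n\) (the edge obtained by letting the \(j\)th coordinate vary). If \(v\) were a non-trivial convex combination of distinct points in \(F_1^n\), all summands would have to lie on this same edge, but \(v\) is the only element of \(F_1^n\) on that edge. Hence \(v\) is a vertex of \(\pA_n\), and so \(v(\pA_n) = \card{F_1^n}\).

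For the count: a vector in \(F_1^n\) is specified by choosing which coordinate takes the value \(1/2\) (\(n\) choices), and then independently assigning a \(0/1\) value to each of the remaining \(n-1\) coordinates (\(2^{n-1}\) choices). Thus \(\card{F_1^n} = n \cdot 2^{n-1}\), giving
\[\xc(\pA_n) \;\ge\; \log(v(\pA_n)) \;=\; \log(n \cdot 2^{n-1}) \;=\; \log n + (n-1) \;\in\; \Omega(n).\]

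There is really no hard step here; the only mild subtlety is the verification that the generators listed for \(\pA_n\) are genuinely vertices rather than potentially redundant, and that is immediate from the cube-edge argument above. Everything else is arithmetic.
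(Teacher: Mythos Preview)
Your proof is correct and follows exactly the paper's approach: count $\lvert F_1^n\rvert = n\cdot 2^{n-1}$ and apply Lemma~\ref{lem:genLBextcomp}. The only difference is that you additionally justify why every element of $F_1^n$ is a vertex of $\pA_n$, which the paper takes for granted; your cube-edge argument for this is fine (and short), so the extra detail does no harm.
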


Combining Lemma~\ref{lem:lbAnGeneric} and Theorem~\ref{thm:AnSmallSym}
we obtain:

\begin{corollary}
  The symmetric extension complexity \(\xcs(\pA_n) = \xc(\pA_n)\) is \(\Theta(n)\). 
\end{corollary}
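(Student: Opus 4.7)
The plan is to sandwich both quantities between matching upper and lower bounds of order $n$, using nothing beyond the two results just proved.

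First I would observe the trivial inequality $\xc(\pA_n) \le \xcs(\pA_n)$, which holds for any polytope and any group, since every symmetric extension is in particular an extension, and $\xc$ is defined as the minimum over a larger class. This gives the chain
\[
\xc(\pA_n) \;\le\; \xcs(\pA_n) \;\le\; \text{(size of any particular symmetric extension)}.
\]

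Next I would plug in the two bounds already available. Theorem~\ref{thm:AnSmallSym} exhibits a concrete symmetric extension of $\pA_n$ of size $O(n)$, so $\xcs(\pA_n) \in O(n)$. On the other hand, Lemma~\ref{lem:lbAnGeneric} gives $\xc(\pA_n) \in \Omega(n)$. Combining these with the chain above yields
\[
\Omega(n) \;\le\; \xc(\pA_n) \;\le\; \xcs(\pA_n) \;\le\; O(n),
\]
from which $\xc(\pA_n) = \xcs(\pA_n) = \Theta(n)$ follows immediately.

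There is essentially no obstacle here; the corollary is purely a bookkeeping consequence of the preceding theorem and lemma. The only subtlety worth flagging is that the symmetry group in question should be specified (the paper uses the full hyperoctahedral group $Z_2 \wr \sg_n$ acting on the cube, which is the symmetry group of $\pA_n$ and is also the group preserved by the construction in Theorem~\ref{thm:AnSmallSym}); this ensures that the $O(n)$ extension legitimately counts as a symmetric extension with respect to the group implicit in the notation $\xcs$.
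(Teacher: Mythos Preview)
Your proof is correct and is exactly the argument the paper intends: the corollary is stated immediately after the sentence ``Combining Lemma~\ref{lem:lbAnGeneric} and Theorem~\ref{thm:AnSmallSym} we obtain'', with no further proof given. Your chain $\Omega(n) \le \xc(\pA_n) \le \xcs(\pA_n) \le O(n)$ spells out precisely what that combination amounts to, and your remark about the symmetry group is a helpful clarification of a point the paper leaves implicit.
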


One can also obtain an extended formulation of size
\(O(n)\) using reflections at the hyperplanes \(x_i = \frac{1}{2}\)
(see \cite{kaibel2011}), however this formulation is asymmetric.

Finally, we
would like to point out that all results of this section also apply
to the polytope \(\pB_n\) given by
\[\pB_n \coloneqq \set{x \in \cube{n}}{\sum_{i \in I} x_i +
  \sum_{i \notin I} (1-x_i) \geq 1 \quad \forall I
  \subseteq [n]}.\]
This is of particular interest because the parity polytope given by 
\[\text{Par}_n \coloneqq \set{x \in \cube{n}}{\sum_{i \in I} x_i +
  \sum_{i \notin I} (1-x_i) \geq 1 \quad \forall I
  \subseteq [n], |I| \text{ odd}}\]
is closely related to \(\pB_n\) and the cube \(\cube{n}\). In fact,
the face lattice of \(\text{Par}_n\) looks very
much like  \(\pB_n\) or \(\cube{n}\). By the above results
we have \(\xcs(\pB_n),
\xcs(\cube{n}) \in O(n)\), even though \(\xcs(\text{Par}_n) \in \Omega(n
\log n)\) by (\cite{PashPar2011}).

\section{The lower bound black-box for symmetric extended formulations}
\label{sec:establ-lower-bounds}
We will now present the main theorem that we will use in the following
to establish lower bounds. 

\begin{theorem}
  \label{th:1}
  Let a \(G\)-polytope \(Q \subseteq \R^{d}\) be
  a symmetric extension of a \(G\)-polytope \(P \subseteq \R^{m}\).
  For every facet \(j\) of \(Q\)
  let \({\mathcal{F}}_{j}\) be a refinement of
  the \(\Stab{G}{j}\)-orbit partition of
  the vertex set \(X\) of \(P\).
  Then for every real solution to the following inequality system
  in the \(c_{x}\)
  \begin{align}
    \label{eq:19}
    \sum_{x \in X} c_{x} &= 1, \\
    \label{eq:20}
    \sum_{x \in F} c_{x} &\geq 0, & F &\in {\mathcal{F}}_{j},
    \, \text{\(j\) facet of \(Q\)}
  \end{align}
  the point \(\sum_{x \in X} c_{x} x\) lies in \(P\).
\begin{proof}
Let \(\sprod{.,.}\) be an invariant scalar product on \(\R^{d}\).
Let \(n_{j}\) be the normal vector of facet \(j\) pointing inwards.
The inequality of the facet \(j\) is thus of the form
\(\sprod{n_{j}, y} \geq r_{j}\) for some real \(r_{j}\).
These are clearly invariant:
they are permuted together with the facets,
i.e., \(n_{g j} = g n_{j} - g 0\) and \(r_{g j} = r_{j}\)
for all \(g \in G\).

Let \(s \colon X \to Q\) be
an invariant section of \(p\).
Via invariance, the value \(\sprod{n_{j},s(x)} - r_{j}\)
is constant as \(x\) runs through
a \(\Stab{G}{j}\)-orbit.
In particular, it is a constant \(A_{F} \geq 0\) on every
\(F \in {\mathcal{F}}_{j}\);
note that \(F\) is a subset of the vertex set
\(X\) of \(P\).
Thus 
\begin{equation}
  \label{eq:18}
  \sprod{n_{j},\sum_{x \in X} c_{x} s(x)} - r_{j} =
  \sum_{x \in X} c_{x} (\sprod{n_{j},s(x)} - r_{j} ) =
  \sum_{F \in {\mathcal{F}}_{j}} \sum_{x \in F} c_{x} A_{F} \geq 0.
\end{equation}
This shows that \(\sum_{x \in X} c_{x} s(x) \in Q\),
hence applying \(p\) we obtain \(\sum_{x \in X} c_{x} x \in P\).
\end{proof}
\end{theorem}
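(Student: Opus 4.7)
The plan is to lift the conjectured combination $\sum_{x \in X} c_{x} x$ to a point in $Q$ and then push it down with $p$. Because $p$ is linear and we want $\sum_{x} c_{x} x \in P = Qp$, it suffices to produce a point in $Q$ projecting to $\sum_{x} c_{x} x$. Given any section $s \colon X \to Q$, the natural candidate is $y \coloneqq \sum_{x} c_{x} s(x)$: indeed, the normalization $\sum_{x} c_{x} = 1$ combined with $s(x)p = x$ ensures $yp = \sum_{x} c_{x} x$. Everything therefore reduces to verifying $y \in Q$.

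To obtain enough symmetry for a clean calculation I would fix, from the outset, both an invariant scalar product $\langle \cdot,\cdot \rangle$ on $\R^{d}$ and an invariant section $s$, as provided by Lemma~\ref{lem:invariantScalarAndSection}. Writing each facet inequality of $Q$ as $\langle n_{j}, \cdot \rangle \geq r_{j}$ with $n_{j}$ the inward normal, the whole proof then rests on a single lemma: the \emph{slack function} $\sigma_{j}(x) \coloneqq \langle n_{j}, s(x) \rangle - r_{j}$ is constant on every $\Stab{G}{j}$-orbit of $X$. Since $\mathcal{F}_{j}$ is by hypothesis a refinement of the $\Stab{G}{j}$-orbit partition, $\sigma_{j}$ is then automatically constant on each block $F \in \mathcal{F}_{j}$; call its common value $A_{F}$, which is nonnegative because $s(x) \in Q$.

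Granting this orbit-constancy, the facet inequality for $y$ unwinds in one line:
\begin{equation*}
\langle n_{j}, y \rangle - r_{j} \;=\; \sum_{x \in X} c_{x}\bigl(\sigma_{j}(x)\bigr) \;=\; \sum_{F \in \mathcal{F}_{j}} A_{F} \sum_{x \in F} c_{x} \;\geq\; 0,
\end{equation*}
using $\sum_{x} c_{x} = 1$ to absorb $r_{j}$ under the sum, the constancy of $\sigma_{j}$ on each $F$, and finally $A_{F} \geq 0$ together with the hypothesis $\sum_{x \in F} c_{x} \geq 0$. This verifies every facet of $Q$ at $y$, so $y \in Q$, and projecting gives $\sum_{x} c_{x} x \in P$.

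The only real obstacle is bookkeeping the equivariance for an \emph{affine} action: the invariant scalar product is invariant under the \emph{linear part} $x \mapsto gx - g0$ only, and the facet data transform accordingly, with $n_{gj} = g n_{j} - g 0$ while $r_{gj} = r_{j}$ (since facets are permuted). For $g \in \Stab{G}{j}$ this gives $n_{gj} = n_{j}$ and $r_{gj} = r_{j}$, which when combined with $s(gx) = g s(x)$ and the invariance identity $\langle gu - g0, gv - g0 \rangle = \langle u, v \rangle$ yields $\sigma_{j}(gx) = \sigma_{j}(x)$. With this distinction between linear and affine parts tracked carefully, everything else is a direct calculation; without it, the identity $\sigma_{j}(gx) = \sigma_{j}(x)$ is easy to mis-state.
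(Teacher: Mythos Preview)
Your proposal is correct and follows essentially the same argument as the paper's own proof: fix an invariant scalar product and an invariant section (Lemma~\ref{lem:invariantScalarAndSection}), show that the slack $\langle n_{j}, s(x)\rangle - r_{j}$ is constant on $\Stab{G}{j}$-orbits and hence on each block $F\in\mathcal{F}_{j}$, and then use \eqref{eq:19}--\eqref{eq:20} to verify every facet inequality for $y=\sum_{x}c_{x}s(x)\in Q$ before projecting. Your treatment of the affine bookkeeping is in fact more explicit than the paper's, which simply asserts the relevant invariance.
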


The result above has a particularly nice
interpretation. When considering a symmetric extension we are allowed
to consider affine combinations of points, rather than convex
combinations, as long as \emph{each sum of coefficients} along an
orbit is non-negative. Put differently, convexity usually requires for
a point to be written as a \emph{convex} combination. In the presence
of symmetry this requirement can be relaxed to an
\emph{affine} combination of points that is convex when averaged over the
orbits.

Theorem~\ref{th:1} can be used to bound the size of extended
formulations as follows. 

\begin{remark}
\label{remark:roadmap}
Suppose we are looking for a
symmetric extended formulation \(Q
\subseteq \R^d\) of a \(G\)-polytope \(P \subseteq \R^m\) with
projection \(p\). Then a
lower bound on the size of \(Q\) (as the number of facets) can be
established in the following way via Theorem~\ref{th:1}:
\begin{enumerate}
\item\label{item:1} Choose a subpartition \(\mathcal
  F_j\) of the
  \(\Stab{G}{j}\)-orbit partition of the vertices of \(P\) for all facets \(j\) of a
  hypothetical \(Q\) of small size.
\item\label{item:2} Find a particular solution \(c_x\) with \(x \in
  X\).
\item\label{item:3} Show that \(\sum_{x \in X} c_x x \notin P\). 
\end{enumerate}  
\end{remark}

Steps~\ref{item:2} and~\ref{item:3} are usually performed
simultaneously by requiring that a solution to the system in
Step~\ref{item:2} violates a valid inequality for \(P\).
This \emph{roadmap} is somewhat similar to Yannakakis's.
However it is more tailored to the
requirements of Theorem~\ref{th:1}. In particular none of the
intermediate steps, such as, e.g., subspace extensions
(defined by equalities and non-negativity constraints) are
needed.

\subsection{Applications to the matching polytope}
\label{sec:matching-polytope}
In this section we will simplify and slightly generalize the result of
\cite{kaibel2010symmetry}, which is itself based on Yannakakis's
technique.
We consider the \(\ell\)-matching polytope of the complete graph
\(K_n = ([n], E_n)\) with \(n \in \N\).
Let \(\mathcal M^\ell(n)\) denote
the set of all matchings of \(K_n\) of size exactly \(\ell\). The
\emph{\(\ell\)-matching polytope} \(\pmatch^\ell(n)\)  is the convex hull of the
characteristic vectors of elements in \(\mathcal M^\ell(n)\), i.e.,
\[\pmatch^\ell(n) \coloneqq \set{\chi(M)}{M \in \mathcal
  M^\ell(n)} \subseteq \cube{E_n}.\]

With \(\sg_n\) acting on the vertices of \(K_n\) by permutation, we have
that \(\pmatch^\ell(n)\) is an \(\sg_n\)-polytope. We will consider
\(\pmatch^\ell(n)\) as an \(\ag_n\)-polytope, i.e., we require less
symmetry for the extension as the \(\ell\)-matching polytope actually possesses. For the
size of any symmetric extended formulation of \(\pmatch^\ell(n)\) we
obtain the following lower bound.

\begin{theorem}
\label{thm:matchingBound}
Let \(n \in \N\) with \(n \geq 10\)  and let \(Q \subseteq \R^d\) be
an \(\ag_{n}\)-symmetric extension of
 \(\pmatch^\ell(n)\). Then the number of facets of \(Q\) is at least 
\[\binom{n}{\lfloor (\ell-1)/2 \rfloor}.\]
\end{theorem}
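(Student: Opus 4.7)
The plan follows the roadmap of Remark~\ref{remark:roadmap}. Set $k = \lfloor (\ell-1)/2 \rfloor$ and assume for contradiction that $Q$ has fewer than $\binom{n}{k}$ facets. By Observation~\ref{obs:facetsToStab}, for every facet $j$ of $Q$ we then have $\card{\ag_n : \Stab{\ag_n}{j}} < \binom{n}{k}$. Since an $\ell$-matching exists only when $\ell \leq n/2$, we have $k < n/2$, so the second alternative of Lemma~\ref{lem:smallIndex} (which forces $k = n/2$) cannot apply. Hence there is a subset $W_j \subseteq [n]$ with $\card{W_j} < k$ such that $\ag([n] \setminus W_j) \leq \Stab{\ag_n}{j}$.

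I take $\mathcal F_j$ to be the partition of the vertex set $X = \mathcal M^\ell(n)$ of $\pmatch^\ell(n)$ into $\ag([n] \setminus W_j)$-orbits; by the containment of groups, $\mathcal F_j$ refines the $\Stab{\ag_n}{j}$-orbit partition as required by Theorem~\ref{th:1}. Concretely, two $\ell$-matchings lie in the same block of $\mathcal F_j$ precisely when they have the same ``trace'' on $W_j$: coincident sets of edges inside $W_j$, coincident sets of $W_j$-vertices matched to $[n] \setminus W_j$, and an equal number of edges lying entirely outside $W_j$.

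It remains to exhibit coefficients $(c_M)_{M \in \mathcal M^\ell(n)}$ satisfying~\eqref{eq:19} and~\eqref{eq:20} with $\sum_M c_M \chi(M) \notin \pmatch^\ell(n)$. The natural target is the blossom inequality $\sum_{e \in E(T)} x_e \leq k$ for an odd set $T \subseteq [n]$ of size $2k+1$, which is valid on $\pmatch^\ell(n)$. Following the Yannakakis / Kaibel--Pashkovich--Theis template (now considerably more transparent in our framework), I would take $(c_M)$ as a small signed combination of two distributions on $\mathcal M^\ell(n)$ that are symmetric under $\sg(T) \times \sg([n] \setminus T)$: a positive weight spread uniformly over matchings with exactly $k$ edges inside $T$ (those saturating the blossom inequality) corrected by a small negative weight on matchings with exactly $k-1$ edges inside $T$, scaled so that $\sum_M c_M = 1$ while $\sum_{e \in E(T)} \sum_M c_M \chi(M)_e$ strictly exceeds $k$.

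The hard part is verifying~\eqref{eq:20}, i.e., nonnegativity of every orbit sum, uniformly over all $W_j$ with $\card{W_j} < k$. The reason the threshold $\card{W_j} < k$ is the right one is that $T$ has $2k+1$ vertices and therefore at least $k+2$ of them lie outside any admissible $W_j$; consequently the $\ag([n] \setminus W_j)$-averaging has enough room to redistribute the negatively weighted matchings so that the positive mass dominates within every orbit. The verification reduces to a combinatorial double-counting within each $W_j$-trace, which is where the bulk of the technical work lies. Once this check is in place, Theorem~\ref{th:1} forces $\sum_M c_M \chi(M) \in \pmatch^\ell(n)$, contradicting the violation of the blossom inequality and establishing the claimed lower bound.
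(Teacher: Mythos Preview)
Your reduction to Theorem~\ref{th:1} via Lemma~\ref{lem:smallIndex} is correct and matches the paper's.  The failure is in the choice of coefficients.  For your ansatz \(c=(1+\epsilon)p-\epsilon q\) (with \(p,q\) uniform on \(\ell\)-matchings having \(k\), respectively \(k-1\), edges inside \(T\)) to satisfy~\eqref{eq:20} for \emph{some} \(\epsilon>0\), you would need every \(\ag([n]\setminus W_j)\)-orbit that meets the support of \(q\) to also meet the support of \(p\); otherwise the orbit sum is \(-\epsilon\,q(F)<0\) for all \(\epsilon>0\).  This fails.  Take \(k\ge 3\), let \(W_j\subseteq T\) with \(\lvert W_j\rvert=k-1\), and consider the orbit in which every vertex of \(W_j\) is unmatched (non-empty once \(2\ell\le n-k+1\)).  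Every matching in this orbit has all its edges in \([n]\setminus W_j\), and since \(\lvert T\setminus W_j\rvert=k+2\), at most \(\lfloor(k+2)/2\rfloor\le k-1\) of them lie in \(E(T)\).  For \(k\in\{3,4\}\) this maximum equals \(k-1\), so the orbit meets the support of \(q\) but not that of \(p\).  The heuristic ``at least \(k+2\) vertices of \(T\) lie outside \(W_j\)'' is therefore insufficient, and a two-parameter family cannot be repaired into a proof here.

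The paper proceeds differently.  Instead of a blossom inequality, it restricts to the face \(\mathcal M\) of matchings supported on a fixed \(2\ell\)-set \(V_*\cup V^*\) split into two \emph{odd} parts, and targets the valid inequality \(\lvert M(V_*:V^*)\rvert\ge 1\) on that face.  Setting \(c_M\) to depend only on \(i=\lvert M(V_*:V^*)\rvert\), an explicit count shows that each orbit sum~\eqref{eq:20} is, up to a positive constant, a polynomial in \(i\) of degree at most \(\lvert V_j\rvert\le k\) whose value at \(0\) is nonnegative.  Choosing the (at most \(k+1\)) weights \(b_i\) so that \(\sum_i b_i f(i)=f(0)\) for all polynomials \(f\) of degree \(\le k\) then forces every orbit sum to be \(\ge 0\) and simultaneously gives \(\sum_i b_i\,i=0\), violating the face inequality.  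This polynomial-interpolation step, using the full \(k+1\) degrees of freedom, is precisely the missing idea.
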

 The proof is similar to the ones in
\cite{yannakakis1991expressing} and \cite{kaibel2010symmetry} however
we can shorten the argument by using Theorem~\ref{th:1}.
\begin{proof}
First we introduce some notation.
For readability
let \(k \coloneqq \left\lfloor \frac{\ell-1}{2} \right\rfloor\).  

Let \(V\) and \(E\) be the vertex set and
edge set of \(K_n\), respectively.
For a set \(M \subseteq E\),
let \(V(M)\) denote the \emph{support of \(M\)},
i.e., the set of endpoints of all edges in \(M\). Morever,
for \(V_1, V_2 \subseteq V\) and \(M \subseteq E\) let \(M(V_1:V_2)\)
denote the set of edges in \(M\) with one endpoint in \(V_1\) and the
other endpoint in \(V_2\).

Recall that \(\ag_n\) acts on \(V\), \(E\)
and the set of facets of \(Q\).

The proof is by contradiction following the roadmap in
Remark~\ref{remark:roadmap},
so we suppose that \(Q\) has less than \(\binom{n}{k}\)
facets.

Second we define a
subpartition \(\mathcal F_j\) of the
\(\Stab{(\ag_n)}{j}\)-orbit partition of
the vertex set of \(\pmatch^\ell(n)\)
  for all facets \(j\). Let \(j\) be a fixed facet.
  Since the number of
  facets is less than \(\binom{n}{k}\) we have
  \(\card{\ag_n:\Stab{(\ag_n)}{j}} \leq \binom{n}{k}\) by
  Lemma~\ref{obs:facetsToStab}. We apply
  Lemma~\ref{lem:smallIndex} to obtain
  a set \(V_j \subseteq V\) of size at most \(k\)
  for any facet \(j\) of \(Q\)
  so that \(H_j \coloneqq \ag(V \setminus V_j) \subseteq
  \Stab{(\ag_n)}{j}\).
Let us define
for all matching \(W \subseteq E(V_{j} : V)\)
with \(\size{W} \leq \ell\)
\begin{equation}
  \label{eq:1}
  F_{W} \coloneqq \{M \text{ \(\ell\)-matching }\mid M(V_{j} : V) = W \}.
\end{equation}
The family \({\mathcal{F}}_{j}\) is chosen to be the collection
of the non-empty \(F_{W}\), which is easily seen to
refine the orbit partition of
\(H_{j}\) and hence form
a subpartition of
the \(\Stab{(\ag_n)}{j}\)-orbit partition of
\(\vertex(\pmatch^\ell(n))\).

Next we find a solution to the system in
Theorem~\ref{th:1}.  Let \(V_{*}\) and \(V^{*}\)
be arbitrary disjoint subsets of \(V\)
of size \(l_{*}\) and \(l^{*}\), respectively, with
\(l_{*} + l^{*} = 2\ell\).
When \(\ell\) is odd, we select \(l_{*} = l^{*} = \ell\),
and when \(\ell\) is even,
we choose \(l_{*} = \ell - 1\) and \(l^{*} = \ell + 1\).
Thus \(l_{*}\) and \(l^{*}\) are always odd.

Let \(\mathcal M\) denote the set of matchings supported on \(V_{*}
\cup V^{*}\). These matchings are all the vertices of a face of
\(\pmatch^\ell(n)\) (defined by \(x_e = 0\) for all \(e
\notin E(V_{*} \cup V^{*})\)).  Since \(l_{*}\) and \(l^{*}\) are odd,
every such matching must have an odd number of edges
between \(V_{*}\) and \(V^{*}\),
so \(\size{M(V_{*} : V^{*})} \geq 1\) is valid for the face.
We select an affine combination \(\sum_{M \in \mathcal M} c_{M} M\) to
violate this inequality. All other \(c_M\) with \(\ell\)-matching \(M \notin \mathcal
M\) are set to \(0\).
All in all, we need to choose the \(c_{M}\) to satisfy
\begin{align}
  \label{eq:2}
  \sum_{M \in \mathcal M} c_{M} &= 1, \\
  \label{eq:3}
  \sum_{M \in F_{W} \cap \mathcal M} c_{M} &\geq 0, \quad \forall\; W \subseteq E(V_{j} : V)
  \text{ matching}, j \text{ facet of } Q\\
  \label{eq:4}
  \sum_{M \in \mathcal M} c_{M} \size{M(V_{*} : V^{*})} &= 0.
\end{align}
In fact, the chosen \(c_{M}\) will only depend on \(\size{M(V_{*} : V^{*})}\),
so we will set
\begin{equation}
  \label{eq:5}
  b_i = c_{M} \cdot \size{\{M : \size{M(V_{*} : V^{*})} = i\}},
\end{equation}
and let \(\mathcal I\) denote the set of encountered values \(\size{M(V_{*} :
  V^{*})}\). We can simplify the system to
\begin{align}
  \label{eq:6}
  \sum_{i \in \mathcal I} b_{i} &= 1, \\
  \label{eq:7}
  \sum_{i \in \mathcal I} b_{i}
  \frac{\size{\{M \in F_{W} \cap \mathcal M: \size{M(V_{*} : V^{*})} = i\}}}%
  {\size{\{M \in \mathcal M : \size{M(V_{*} : V^{*})} = i\}}}
  &\geq 0,  \quad \forall\; W \text{ as above} \\
  \label{eq:8}
  \sum_{i \in \mathcal I} b_{i} i &= 0.
\end{align}

Now we determine the coefficients in \eqref{eq:7}.
For this we compute the number of matchings \(M\)
with \(\size{M(V_{*} : V^{*})} = i\).
Note that \(\sg(V_{*}) \times \sg(V^{*})\) acts transitively on these
matchings, so the number is the index of
the stabilizer of any such matching
by Lemma~\ref{lem:stabOrbThm}.
The stabilizer consists of the permutations permuting the edges
between \(V_{*}\) and \(V^{*}\), the edges lying completely in \(V_{*}\),
and the edges lying completely in \(V^{*}\).
Also endpoints of
the latter two kinds of edges can be flipped independently,
however not those of the edges between \(V_{*}\) and \(V^{*}\).
So the stabilizer is
\[\Stab{(\sg(V_{*}) \times \sg(V^{*}))}{M} = \sg_{i} \times ({\mathbb{Z}}_{2} \wr \sg_{\frac{l_{*} - i}{2}})
\times ({\mathbb{Z}}_{2} \wr \sg_{\frac{l^{*} - i}{2}}),\]
and its index (by Lemma~\ref{lem:stabOrbThm}) is
\begin{equation}
  \label{eq:9}
  \size{\{M \in \mathcal M : \size{M(V_{*} : V^{*})} = i\}}
  =
  \frac{l_{*}! \cdot l^{*}!}{i!
    \cdot 2^{\frac{l_{*} - i}{2}} \frac{l_{*} - i}{2}!
    \cdot 2^{\frac{l^{*} - i}{2}} \frac{l^{*} - i}{2}!}.
\end{equation}

Next we compute the number of matchings \(M \in F_{W} \cap \mathcal M\)
for which \(\size{M(V_{*} : V^{*})} = i\)
provided that such matchings exist.
Let
\[a_{*} \coloneqq \size{W(V_{*})}, \qquad  a^{*} \coloneqq \size{W(V^{*})}, \qquad a_{*}^{*} \coloneqq \size{W(V_{*} : V^{*})},\]
where \(W(V_{*}) = W(V_{*} : V_{*})\) is
the set of edges in the matching \(W\) supported on
\(V_*\), the set \(W(V^{*})\) is similarly defined,
and \(W(V_{*} : V^{*})\) is the
set of edges with one endpoint in \(V_{*}\) and the other one in \(V^{*}\).
This is essentially the same problem as above with different
parameters. We conclude
\begin{multline}
  \label{eq:13}
  \size{\{M \in F_{W} \cap \mathcal M: \size{M(V_{*} : V^{*})} = i\}}
  \\
  =
  \frac{(l_{*} - 2 a_{*} - a_{*}^{*})!
    \cdot (l^{*} - 2 a^{*} - a_{*}^{*})!}{(i - a_{*}^{*})!
    \cdot 2^{\frac{l_{*} - 2 a_{*} - i}{2}} \frac{l_{*} - 2 a_{*} - i}{2}!
    \cdot 2^{\frac{l^{*} - 2 a^{*} - i}{2}} \frac{l^{*} - 2 a^{*} - i}{2}!}.
\end{multline}
All in all, \eqref{eq:7} expands to
\begin{multline}
  \label{eq:14}
  \sum_{i \in \mathcal{I}} b_{i}
  \frac{2^{a_{*} + a^{*}} \cdot (l_{*} - 2 a_{*} - a_{*}^{*})!
    \cdot (l^{*} - 2 a^{*} - a_{*}^{*})!}{l_{*}! \cdot l^{*}!}
  \cdot
  i (i-1) \dots (i - a_{*}^{*} + 1)
  \\
  \cdot
  \frac{l_{*} - i}{2}
  \left(
    \frac{l_{*} - i}{2} - 1
  \right)
  \dots
  \left(
    \frac{l_{*} - i}{2} - a_{*} + 1
  \right)
  \\
  \cdot
  \frac{l^{*} - i}{2}
  \left(
    \frac{l^{*} - i}{2} - 1
  \right)
  \dots
  \left(
    \frac{l^{*} - i}{2} - a^{*} + 1
  \right)
  \geq 0.
\end{multline}
Observe that this is a polynomial in \(i\) of degree
\(a_{*} + a^{*} + a_{*}^{*} \leq \size{V_{j}} \leq k\)
with a non-negative constant term.
Furthermore \(\size{\mathcal I} \leq k+1\),
as \(\min(l_{*}, l^{*}) = 2k + 1\) and \(\mathcal{I}\) contains only
odd numbers.
Hence to satisfy all the inequalities,
we can choose the \(b_{i}\) such that
\begin{equation}
  \label{eq:15}
  \sum_{i \in \mathcal{I}} b_{i} f(i) = f(0) \quad \deg f \leq k
\end{equation}
for every polynomial \(f\) of degree at most \(k\).

\end{proof}

\section{Establishing quadratic lower bounds}
\label{sec:establ-super-line}
We will now present a technique to establish super linear lower
bounds on the size of symmetric extended formulations. The technique is
 based on \cite{pashkovisch2009perm} however we generalize
previous constructions and provide a uniform, algebraic framework. In fact it
suffices to check few conditions to establish super linear lower
bounds. 

The following theorem will be central to our following discussion. A
similar result had been already established 
in \cite{pashkovisch2009perm} in a combinatorial fashion. We
provide a new, significantly shorter, algebraic proof. 

\begin{theorem}
  \label{thm:orbitCharSuperlinear}
  Let \(Q \subseteq \R^d\) be a symmetric extension
  of an \(\ag_n\)-polytope \(P \subseteq \R^m\).
  Assume that the number  \(N\) of facets of \(Q\)
  is less than \(n(n-1)/2\).
  If \(j\) is a facet of
  \(Q\), then either \(\orb{\ag_n}{j} \cong [n]\) or \(\orb{\ag_n}{j}
  \cong [1]\). In particular, the orbits of the facets of \(Q\) decompose
  \([N]\) into sets of sizes \(n\) and \(1\).
  \begin{proof}
    Let \(j\) be a facet of \(Q\). As \(N < \frac{n(n-1)}{2}\) we obtain \([\ag_n : \Stab{(\ag_n)}{j}]
    < \frac{n(n-1)}{2}\),
    where \(\Stab{(\ag_n)}{j}\) is the stabilizer of \(j\) in \(A_{n}\).
    Applying Lemma~\ref{lem:smallIndex} yields
    that there exists an \(\ag_{n}\)-invariant subset \(W_j\)
    with \(\card{W_j} \leq 1\) such
    that \(\ag([n]\setminus W_j)\) is a subgroup of \(\Stab{(\ag_n)}{j}\).

Since \(\card{W_{j}} \leq 1\),
there does not exist a non-identical permutation of \(W_{j}\), hence
the subgroup \(\ag([n]\setminus W_{j})\) is maximal with the property of
leaving \(W_{j}\) invariant,
so, in fact, \(\Stab{(\ag_{n})}{j} = \ag([n]\setminus W_{j})\).
It follows that
either \(\orb{\ag_{n}}{j} \cong [n]\) (when \(\card{W_{j}} = 1\))
or \(\orb{\ag_{n}}{j} \cong [1]\) (when \(W_{j} = \emptyset\)). This proves the first
part of the claim. The second part follows immediately as the
orbits induce a partition of \([N]\).
\end{proof}
\end{theorem}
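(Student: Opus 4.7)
The plan is to mirror the opening moves in the proof of Theorem~\ref{thm:matchingBound}: translate the facet bound into an index bound on a stabilizer via Observation~\ref{obs:facetsToStab}, extract the structure of that stabilizer from Lemma~\ref{lem:smallIndex}, and finally read off orbit sizes via the Orbit-Stabilizer Theorem. Concretely, I would fix a facet $j$ of $Q$ and combine $N < n(n-1)/2 = \binom{n}{2}$ with Observation~\ref{obs:facetsToStab} to obtain $\card{\ag_n : \Stab{(\ag_n)}{j}} < \binom{n}{2}$. Lemma~\ref{lem:smallIndex} then applies with $k = 2$; case~(2) of that lemma would force $k = n/2$, hence $n = 4$, which is incompatible with the standing hypothesis $n \geq 10$ built into Lemma~\ref{lem:smallIndex}. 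Consequently case~(1) supplies a $\Stab{(\ag_n)}{j}$-invariant subset $W_j \subseteq [n]$ with $\card{W_j} < 2$ and $\ag([n] \setminus W_j) \subseteq \Stab{(\ag_n)}{j}$.

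The extra observation I would exploit is that this containment is in fact an equality. Since $\card{W_j} \leq 1$, any set-invariance of $W_j$ under $\Stab{(\ag_n)}{j}$ is automatically a pointwise fixation, so every element of $\Stab{(\ag_n)}{j}$ fixes $W_j$ pointwise; being an even permutation of $[n]$ that fixes $W_j$, it must lie in $\ag([n] \setminus W_j)$. Hence $\Stab{(\ag_n)}{j} = \ag([n] \setminus W_j)$, and the Orbit-Stabilizer Theorem (Lemma~\ref{lem:stabOrbThm}) yields $\card{\orb{\ag_n}{j}} = \card{\ag_n : \ag([n] \setminus W_j)}$, which equals $1$ when $\card{W_j} = 0$ and $n$ when $\card{W_j} = 1$. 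The second part of the statement is then immediate, because $\ag_n$-orbits partition the facet set $[N]$ into sets of the two admissible sizes.

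I do not anticipate a substantial obstacle; once Lemma~\ref{lem:smallIndex} is invoked with the correct $k$, everything reduces to bookkeeping. The mildly delicate point will be ruling out case~(2) of that lemma, which is handled by the parameter mismatch between $k = 2$ and $n/2$ combined with the $n \geq 10$ assumption. One small sanity check worth making explicit is that for $\card{W_j} = 1$ the intersection $\sg([n] \setminus W_j) \cap \ag_n$ really coincides with $\ag([n] \setminus W_j)$, since fixing a single point does not alter the parity of a permutation; this rules out any stray elements of $\sg_n \setminus \ag_n$ creeping into the stabilizer.
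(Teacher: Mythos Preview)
Your proposal is correct and follows essentially the same argument as the paper: bound the index of the facet stabilizer, invoke Lemma~\ref{lem:smallIndex} with \(k=2\), deduce that the stabilizer equals \(\ag([n]\setminus W_j)\) for some \(W_j\) of size at most one, and read off the orbit size by Orbit--Stabilizer. You are in fact slightly more explicit than the paper in two places---you rule out case~(2) of Lemma~\ref{lem:smallIndex} via the incompatibility \(k=2\neq n/2\) for \(n\geq 10\), and you spell out why the invariance of \(W_j\) under \(\Stab{(\ag_n)}{j}\) forces pointwise fixation and hence the reverse containment---but these are just the details the paper leaves implicit.
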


Using Theorem~\ref{thm:orbitCharSuperlinear} we will now derive a
sufficient condition for an \(\ag_n\)-polytope to admit only symmetric
extensions of size \(\Omega(n^2)\); in fact the condition can
be applied more widely and \(\binom{n}{2}\) is the limiting case.
The main idea is that a
small symmetric extended formulation has to
average combinatorial properties of the polytope.
The smaller the required size, the more the formulation
averages.
As a consequence, highly asymmetric combinatorial properties
are obstructions to small formulations.
In a slightly more abstract
framework, we can say that the language defined by the vertices of,
say, such a 0/1-polytope is too complex to be decided by a small
symmetric extension.

We would like to stress that the dimension of the polytope
in the next theorem is irrelevant. 

\begin{theorem}
  \label{thm:superLinLB}
  Let \(P\) be an \(\ag_n\)-polytope.
  Let \(J \subseteq [n-1]\) be a non-empty subset of size \(k\).
  For all \(j \in J\),
  let \(H_j \subseteq \ag_{n}\) be a subgroup
  with orbits \(\{1, 2, \dots, j \}\) and \(\{ j+1, \dots, n \}\)
  in \([n]\).
  Then \(\xcs(P) \geq \frac{n k}{2}\) if there exist
  \begin{enumerate}
  \item
    a family \(\set{F_{j}}{j \in J}\) of faces of \(P\)
    such that \(F_j\) is invariant under \(H_j\);
  \item \label{item:4}
    a permutation \(\zeta_j \in \ag_{n}\) for all \(j \in J\) so that
    \(\zeta_j^{-1} ([j]) = [j-1] \cup \face{j+1}\)
    and vertices \(\set{v_{j}}{j \in J}\)
    such that each \(v_{j}\) belongs to \emph{all} the faces \(F_{i}\)
    with \(i \in J\)
    and \(\zeta_j v_{j} \notin F_j\).
  \end{enumerate}
  \begin{remark}
    The above formulation of Theorem~\ref{thm:superLinLB} is tailored
    towards deriving lower bounds:
 for specific polytopes it is particularly easy to
    check the existence of the \(v_{j}\).
    A more theoretical approach is that
    instead of the vertices \(v_{j}\) we require equivalently
    \(\zeta_j F \nsubseteq F_j\)
    where
    \(F \coloneqq \bigcap_{j \in [n-1]} F_j\).
    (In particular,
    \(F \coloneqq \bigcap_{j \in J} F_j \neq \emptyset\)
    is a face.)
    This rephrases the condition completely in the language
    of the face lattice of the polytope.
  \end{remark}
 \begin{proof}[Proof of Theorem~\ref{thm:superLinLB}]
Let \(F \coloneqq \bigcap_{j \in [n-1]} F_j\).
Then \(v_{j} \in F\) and hence
\(\zeta_j F \nsubseteq F_j\) for all \(j \in J\).
In particular, \(F\) is a non-empty face,
so there exists \(v \in \relint(F)\).

First
observe that \(\zeta_j v \notin F_j\) for all
   \(j \in J\): we have \(\zeta_j v \in \relint(\zeta_j F)\),
   and hence \(\zeta_j F\) is the
   smallest face containing \(\zeta_{j} v\).
   Therefore \(\zeta_j v \in F_j\)
   would imply \(\zeta_j F \subseteq F_j\), which contradicts our
   assumption.  We introduce the following notation for symmetrization:
let \(v[G] \coloneqq \frac{1}{\card{G}}\sum_{g \in G} gv\)
the group average of \(v\) with respect to the group \(G\). 

Second we define points \(v_{\epsilon,j}\) for
\(j \in J\) and \(\epsilon > 0\) as follows:
\[v_{\epsilon,j} \coloneqq (1+\epsilon) v [H_j] - \epsilon (\zeta_j
v)[H_j]. \]
Observe that \(v[H_j], (\zeta_j v)[H_j] \in P\). We claim that
\(v_{\epsilon,j} \notin P\) for all \(j \in J\) and \(\epsilon >
0\). As \(F_j\) is \(H_j\)-invariant we obtain that \(v[H_j] \in
F_j\). Similarly, we have that \((\zeta_j v)[H_j] \notin F_j\) as
\(\zeta_j v \notin F_j\). For any \(\epsilon > 0\) the point
\(v_{\epsilon,j}\) lies on the line of \(v[H_j], (\zeta_j v)[H_j]\)
with \(v[H_j]\) separating
\((\zeta_j v)[H_j]\) and \(v_{\epsilon,j}\).
In particular, \(v_{\epsilon,j}\) is on the wrong side of \(F_j\)
(more precisely, it is on the wrong side of any hyperplane
cutting out \(F_{j}\) from \(P\)),
so \(v_{\epsilon,j} \notin P\).
The points \(v_{\epsilon,j}\) will serve as those that any 
symmetric extension of size less than \(n k/2\) fails to cut
off.

Now  let \(Q \subseteq \R^d\) be a symmetric extension of
\(P\), i.e., \(Q\) is itself an \(\ag_n\)-polytope and let \(p\) be the
 associated projection. We choose \(w
 \in Q\) such that \( wp = v\). We define points \(w_{\epsilon,j}\)
 as follows
\[w_{\epsilon,j} \coloneqq  (1+\epsilon) w [H_j] - \epsilon (\zeta_j
w)[H_j].\]
As before we have \(w[H_j], (\zeta_j w)[H_j] \in Q\). Now that \(p\)
is invariant, we obtain that \(w_{\epsilon,j} p =
v_{\epsilon,j}\) for any \(j \in J\) and \(\epsilon > 0\). However,
\(v_{\epsilon,j} \notin P\) and therefore \(w_{\epsilon,j}
\notin Q\) for any \(j \in J\) and \(\epsilon > 0\). We will count how
many facets \(Q\) has to have in
order to ensure this. 

For contradiction, suppose that \(Q\) is given by less than
\(nk/2 \leq n(n-1)/2\) inequalities, hence
Theorem~\ref{thm:orbitCharSuperlinear} applies
and we obtain that the orbits of
facets under \(\ag_n\) are isomorphic either to \([1]\) (fixed
point) or to \([n]\). Let \(T\) be any facet of \(Q\).
If \(w[H_j] \notin T\) then
\(w_{\epsilon,j}\) is on the side of \(T\) pointing inwards
for \(\epsilon\) small enough,
as then \(w_{\epsilon,j}\) is close to \(w[H_j]\). Hence the point
could not be separated and therefore we only have to consider the other case:
\(w[H_{j}] \in T\), i.e.,
for all \(h \in H_j\) we have \(hw \in T\)
and equivalently \(w \in hT\).
Now \(T\) cuts off \(w_{\epsilon,j}\) if and only if
\((\zeta_j w)[H_j] \notin T\).
In other words, there exists \(h \in H_j\) such that
\(w \notin \zeta_j^{-1}hT\).
This is not possible
if the orbit of \(T\) is a fixed point, as it requires both \(w \in
T\) and \(w \notin T\); a contradiction.

If the orbit of \(T\) is isomorphic to \([n]\),
let \(T_{i}\) denote
the face in the orbit corresponding to \(i \in [n]\).
If \(T\) lies in the \(H_{j}\)-orbit \(\{T_{1}, \dots, T_{j}\}\)
then the above conditions state that \(w\) is contained in
\(T_{1}, \dots, T_{j}\) but not in at least one of
\(T_{1}, \dots, T_{j-1}, T_{j+1}\)
(using the condition \(\zeta_j^{-1} ([j]) = [j-1] \cup \face{j+1}\)),
which is only possible
if \(w\) is not contained in \(T_{j+1}\).
Similarly,
if \(T\) lies in the \(H_{j}\)-orbit \(\{T_{j+1}, \dots, T_{n}\}\)
then the above conditions say that \(w\) is contained in
\(T_{j+1}, \dots, T_{n}\) but not in \(T_{j}\).

All in all,
an orbit of facets cuts off \(w_{\epsilon, j}\)
for small \(\epsilon > 0\) if and only if it is isomorphic to \([n]\),
and
\begin{enumerate}
\item
  \(w \in T_{i}\) for all \(i \leq j \phantom{{} + 1}\) 
  but \(w \notin T_{j+1}\), or
\item
  \(w \in T_{i}\) for all \(i \geq j+1\)
  but \(w \notin T_{j}\).
\end{enumerate}
Observe that either case is satisfied by at most one \(j \in [n-1]\)
for a given orbit.
Therefore every orbit can cut off \(w_{\epsilon,j}\)
for small \(\epsilon\) for at most two \(j\).
Hence we need at least \(k/2\) orbits of size \(n\),
so altogether at least \(\frac{n k}{2}\) facets; a contradiction. 
\end{proof}
\end{theorem}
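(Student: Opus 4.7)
The plan is, for any symmetric extension $Q \to P$ with too few facets, to produce a family of points $w_{\epsilon,j}$ ($j \in J$, $\epsilon > 0$ small) in the ambient space of $Q$ that must lie outside $Q$, and then to pigeonhole these points against the orbits of facets of $Q$. The construction proceeds in two layers: first build explicit ``bad'' points $v_{\epsilon,j}$ just outside $P$ from the faces $F_j$ and the permutations $\zeta_j$, then lift them into the extension via an arbitrary preimage of a well-chosen vertex and the equivariance of the projection.

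For the first layer, I would observe that the hypothesis that every $v_j$ belongs to every $F_i$ forces $F \coloneqq \bigcap_{i \in J} F_i$ to be a non-empty face, so there is $v \in \relint(F)$. Since $F_j$ is $H_j$-invariant, the symmetrization $v[H_j] \coloneqq \tfrac{1}{\card{H_j}}\sum_{h \in H_j} h v$ still lies in $F_j$. On the other hand $\zeta_j v \in \relint(\zeta_j F)$, and the assumption $\zeta_j v_j \notin F_j$ forces $\zeta_j F \nsubseteq F_j$, so $(\zeta_j v)[H_j] \notin F_j$. Then the affine combination
\[v_{\epsilon,j} \coloneqq (1+\epsilon)\, v[H_j] - \epsilon\, (\zeta_j v)[H_j]\]
sits on the far side of $F_j$ from $(\zeta_j v)[H_j]$, hence lies outside $P$ for every $\epsilon > 0$. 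Next I would lift: pick $w \in Q$ with $wp = v$, define the analogous $w_{\epsilon,j}$ in $Q$'s space, and use equivariance of $p$ to get $w_{\epsilon,j} p = v_{\epsilon,j} \notin P$, so $w_{\epsilon,j} \notin Q$.

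Each $w_{\epsilon,j}$ must therefore be strictly separated from $Q$ by some facet $T$. A facet that cuts $w_{\epsilon,j}$ for arbitrarily small $\epsilon > 0$ is precisely one that contains $w[H_j]$ but not $(\zeta_j w)[H_j]$; equivalently, $w \in hT$ for every $h \in H_j$, yet $w \notin \zeta_j^{-1} h T$ for at least one $h \in H_j$. Translating the separation requirement into this membership form is the setup for the counting step.

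The heart of the argument, and the main obstacle, is the counting. Assuming $Q$ has fewer than $nk/2 \leq n(n-1)/2$ facets, Theorem~\ref{thm:orbitCharSuperlinear} forces every facet orbit to have size $1$ or $n$. A fixed facet $T$ cannot cut any $w_{\epsilon,j}$, since that would require $w \in T$ and $w \notin T$ simultaneously. For an orbit of size $n$, labelled $T_1, \dots, T_n$ compatibly with the $\ag_n$-action, the $H_j$-orbits on the indices are $\face{1,\dots,j}$ and $\face{j+1,\dots,n}$. Unwinding the membership conditions using $\zeta_j^{-1}([j]) = [j-1] \cup \face{j+1}$ shows the orbit cuts $w_{\epsilon,j}$ exactly when either $w \in T_i$ for all $i \leq j$ but $w \notin T_{j+1}$, or $w \in T_i$ for all $i \geq j+1$ but $w \notin T_j$. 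In either case $j$ is pinned to a unique ``transition'' index determined by the orbit, so each size-$n$ orbit serves at most two values of $j \in J$. Therefore at least $k/2$ orbits, hence at least $nk/2$ facets, are needed, contradicting the assumption.
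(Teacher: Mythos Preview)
Your proposal is correct and follows essentially the same argument as the paper's proof: you construct the same points $v_{\epsilon,j}$ and their lifts $w_{\epsilon,j}$, invoke Theorem~\ref{thm:orbitCharSuperlinear} under the assumption of fewer than $nk/2$ facets, and carry out the identical pigeonhole on size-$n$ facet orbits via the transition-index analysis driven by $\zeta_j^{-1}([j]) = [j-1] \cup \{j+1\}$. The only (harmless) discrepancy is that you intersect over $J$ when defining $F$, while the paper writes $\bigcap_{j \in [n-1]} F_j$; since the $F_j$ are only defined for $j \in J$, your version is in fact the intended one.
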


Observe that property~\ref{item:4} from above is very similar to the \emph{basis
  exchange property} of matroids.  In fact the functions \(\zeta_{j}\)
perform such a basis exchange (and possibly more); see Corollary~\ref{cor:matroid}.

\begin{remark}
  Observe that Theorem~\ref{thm:superLinLB} is only
  about a \emph{linear} number of faces of \(P\). It is natural to
  wonder why
  one cannot just \emph{add} these additional constraints. It
  turns out that this is not possible due to the \(\ag_{n}\)-symmetry
  of \(P\). In fact, we would have to add a linear number of cosets of
  facets, each of which is of
  linear size. 
\end{remark}

We shall now provide simplified proofs for known lower bounds using
Theorem~\ref{thm:superLinLB}.  The first two results already appeared in
\cite{pashkovisch2009perm}.

The polytopes we will consider can be found in
\cite{kaibel2010symmetry}, \cite{pashkovisch2009perm}, and
\cite{combBoundsNNR} (see also
Appendix~\ref{sec:considered-polytopes}).

For simplicity,
in the examples we specify explicitly
neither the permutations \(\zeta_{j}\)
nor the groups \(H_{j}\). In fact, the actual choice of \(\zeta_{j}\) does
not matter; a canonical choice is
the transposition \(\zeta_j =(j\ j+1)\).
Moreover, we can always choose
\(H_j \coloneqq
\ag_{n} \cap (\sg_{[j]} \times \sg_{[n] \setminus [j]})\).

\begin{corollary}[Permutahedron]
\label{cor:permLB}
Let \(P_{\textup{perm}}(n) \subseteq \R^n\) be the permutahedron on
\([n]\). Then \(\xcs(P_{\textup{perm}}(n)) \geq \frac{n(n-1)}{2}\).
\begin{proof}

Let \(F_j \coloneqq \face{\sum_{i=1}^{j} x_{i} = \frac{j(j+1)}{2}}\)
for \(j \in [n-1]\) and \(v_{j} = v \coloneqq (1,2, \dots, n)\).
Observe that \(v\) is contained in all the \(F_{j}\)
(in fact, \(\bigcap_{j \in [n-1]} F_j = \face{v}\)).
Clearly, \(F_j\) is invariant under \(H_j\) and we can also
verify that \(\zeta_j v \notin F_j\).
The result now follows from Theorem~\ref{thm:superLinLB}.
\end{proof}
\end{corollary}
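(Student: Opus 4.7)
My plan is to apply Theorem~\ref{thm:superLinLB} with $J \coloneqq [n-1]$, so that $k = n-1$ and the conclusion of the theorem reads exactly $\xcs(P_{\textup{perm}}(n)) \geq n(n-1)/2$. The permutahedron is an $\sg_n$-polytope under coordinate permutation, and hence also an $\ag_n$-polytope; the natural choice of subgroup is $H_j \coloneqq \ag_n \cap (\sg_{[j]} \times \sg_{[n] \setminus [j]})$, which has the required orbits $\{1,\dots,j\}$ and $\{j+1,\dots,n\}$ on $[n]$.

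For the faces, the obvious candidate is the one obtained by tightening the defining inequality for $S = [j]$, namely $F_j \coloneqq \{x \in P_{\textup{perm}}(n) : \sum_{i=1}^{j} x_i = j(j+1)/2\}$. This is manifestly $H_j$-invariant because $H_j$ permutes the coordinates of $[j]$ among themselves. The common vertex is $v_j \coloneqq v \coloneqq (1,2,\dots,n)$, which lies in every $F_j$ since $\sum_{i=1}^j i = j(j+1)/2$.

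The remaining task is to exhibit, for each $j \in [n-1]$, a permutation $\zeta_j \in \ag_n$ satisfying $\zeta_j^{-1}([j]) = [j-1] \cup \{j+1\}$ together with $\zeta_j v \notin F_j$. The transposition $(j\;j+1)$ realises the combinatorial condition but is odd; composing it with an additional transposition or $3$-cycle supported entirely inside $[j-1]$ or entirely inside $\{j+2,\dots,n\}$ restores even parity without disturbing the required action on $[j]$, and such a supporting structure is available whenever $n \geq 3$. A short direct computation then gives $\sum_{i \in [j]} (\zeta_j v)_i = j(j+1)/2 + 1 \neq j(j+1)/2$, so $\zeta_j v \notin F_j$ as required. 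The main obstacle, such as it is, is this parity adjustment of $\zeta_j$ into $\ag_n$; the degenerate case $n = 2$ reduces to $\xcs \geq 1$, which is immediate. Feeding everything into Theorem~\ref{thm:superLinLB} then yields the stated lower bound.
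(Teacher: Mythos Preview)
Your proof is correct and identical in approach to the paper's: same faces \(F_j\), same vertex \(v=(1,2,\dots,n)\), same \(H_j\), and the same appeal to Theorem~\ref{thm:superLinLB}; you are actually more careful than the paper about forcing \(\zeta_j\) into \(\ag_n\), since the paper simply takes the transposition \((j\ j{+}1)\) as its ``canonical choice''. One small slip: composing with a \(3\)-cycle does \emph{not} restore even parity (a \(3\)-cycle is even), but your extra-transposition option is valid and suffices for all \(n\) where the underlying machinery (Lemma~\ref{lem:smallIndex}, hence Theorem~\ref{thm:orbitCharSuperlinear}) applies.
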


With the remark in Section~\ref{sec:permutahedron} this yields
\(\xcs(P_{\textup{perm}}(n)) = \Theta(n^2)\).

\begin{corollary}[Cardinality indicating polytope]
Let \(P_{\textup{card}}(n) \subseteq \R^n\) be the cardinality indicating polytope. Then
\(\xcs(P_{\textup{card}}(n)) \geq \frac{n(n-1)}{2}\).  
  \begin{proof}
Let \[F_j \coloneqq \face{\sum_{i=1}^{j} x_{i} = \sum_{i=1}^{j} i z_{i} +
\sum_{i=j+1}^{n} j z_{i}}\] and choose the \(x\)-part of \(v_j\) to be
\((1,1,\dots,1,0,0,\dots,0) \) with \(1\) appearing \(j\)
times for \(j \in [n-1]\).  We observe that \(v_j
\in F_{i}\) for all \(i\) and, as before,
\(\zeta_j v_j \notin F_j\).
The result follows from Theorem~\ref{thm:superLinLB}.
 \end{proof}
\end{corollary}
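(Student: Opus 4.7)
The plan is to invoke Theorem~\ref{thm:superLinLB} with \(J = [n-1]\), so that \(k = n-1\) and the conclusion \(\xcs \geq nk/2 = n(n-1)/2\) matches the desired bound. The natural candidates for the faces come from turning each defining inequality of \(P_{\text{card}}(n)\) indexed by \(S = [j]\) into an equality; I set
\[F_j \coloneqq \Bigl\{(x,z) \in P_{\text{card}}(n) \,\Big|\, \sum_{i=1}^j x_i = \sum_{k=0}^j k z_k + j \sum_{k=j+1}^n z_k\Bigr\}\]
for \(j \in [n-1]\), and take the canonical subgroups \(H_j = \ag_n \cap (\sg_{[j]} \times \sg_{[n]\setminus[j]})\) suggested in the remark preceding Corollary~\ref{cor:permLB}. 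Because the defining equation is invariant under separate permutations of \([j]\) and of \([n]\setminus[j]\) (with the \(z\)-part fixed by the action), each \(F_j\) is \(H_j\)-invariant.

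For the vertices I choose \(v_j \coloneqq (1,\dots,1,0,\dots,0;\, e_j)\) with exactly \(j\) leading ones, which is a vertex of \(P_{\text{card}}(n)\) since it has the form \((x, e_{\|x\|_1})\) with \(x \in \binSet^n\). A short case distinction on whether \(j \leq i\) or \(j > i\) shows that both sides of the equation defining \(F_i\) evaluate to \(\min(i, j)\) upon substituting \(v_j\) (using that \(z_j = 1\) is the only nonzero \(z\)-entry), so \(v_j \in F_i\) for every \(i \in [n-1]\). As the permutation \(\zeta_j\) I take the transposition \((j, j+1)\) adjusted by an auxiliary transposition among coordinates outside \(\{j, j+1\}\) to land in \(\ag_n\); the adjustment preserves the required property \(\zeta_j^{-1}([j]) = [j-1] \cup \{j+1\}\) and acts trivially on \(v_j\), since the coordinates available for adjustment lie either entirely in the leading block of \(1\)'s or entirely in the trailing block of \(0\)'s.

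The remaining condition is immediate: applying \(\zeta_j\) swaps the \(1\) at coordinate \(j\) with the \(0\) at coordinate \(j+1\) while keeping \(z\) fixed, so the left-hand side of the equation defining \(F_j\) drops to \(j - 1\), whereas the right-hand side remains \(j\); thus \(\zeta_j v_j \notin F_j\). Theorem~\ref{thm:superLinLB} then yields \(\xcs(P_{\text{card}}(n)) \geq n(n-1)/2\). The main (mild) obstacle is the combinatorial fit between the faces and the vertices, i.e., exhibiting a single family \(\{v_j\}\) simultaneously incident to \emph{all} the \(F_i\); the distinguished \(z\)-variables of the cardinality indicating polytope make this compatibility transparent, since pinning down \(z = e_j\) forces both sides of every \(F_i\)-equation to read off the same quantity \(\min(i,j)\).
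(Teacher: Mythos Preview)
Your proof is correct and follows essentially the same approach as the paper's: the same faces \(F_j\), the same vertices \(v_j=(1^j,0^{n-j};e_j)\), and the same appeal to Theorem~\ref{thm:superLinLB} with \(J=[n-1]\). You supply more explicit verification of the incidences \(v_j\in F_i\) via the \(\min(i,j)\) computation and of the parity adjustment for \(\zeta_j\in\ag_n\), but the underlying argument is identical.
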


Note that the \(\ag_n\)-symmetry of \(P_{\textup{card}}(n)\) permutes only the
entries of \(x\) but leaves the entries of \(z\) unchanged. 
Together with the remark in Section~\ref{sec:card-indic-polyt} we
obtain that \(\xcs(P_{\textup{card}}(n)) = \Theta(n^2)\).

Observe that we can obtain a uniform \(v\),
i.e., \(v \in F\) such that \(\zeta_j v \notin F_j\)
for all \(j \in [n-1]\):
e.g., \(v \coloneqq \frac{1}{n-1}\sum_{j \in [n-1]} v_j\).
In fact, any convex combination of the \(v_j\) (with all
coefficients non-zero) is sufficient. Such an averaged point is not a
vertex however and might be harder to identify right away.  

Often it suffices to 
identify an ascending chain of subsets \(S_1 \subseteq \dots \subseteq
S_{n-1} \subseteq [n-1]\) and derive the \(F_j\) from those. We will
demonstrate this for the case of the spanning tree polytope.

\begin{corollary}[Spanning tree polytope] 
\label{cor:spanTree}
  Let \(P_{\textup{STP}}(K_n)\) be the spanning tree polytope of the complete
  graph \(K_n\) on \(n\) vertices. Then
\(\xcs(P) \geq \frac{n(n-1)}{2}\).  
  \begin{proof}
    Let \(S_j \coloneqq [j]\) and 
\[F_j \coloneqq \face{\sum_{e \in E(S_j)} x_e  = \size{S_j} - 1},\]
where \(E(S_{j})\) denotes the set of edges
between the vertices in \(S_{j}\).
Now let \(v \coloneqq (1,2,\dots,n)\) be the path from
\(1\) to \(n\).  Observe that \(v\) is a vertex of all the \(F_{j}\).
Moreover, we have \(\zeta_j v \notin F_j\) as
\(\zeta_j v\) restricted to \(S_j\) is not a connected graph
and hence does not lie on the facet \(F_j\).
Again we can apply Theorem~\ref{thm:superLinLB}
and the claim follows.
  \end{proof}
\end{corollary}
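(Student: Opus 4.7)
My plan is to apply Theorem~\ref{thm:superLinLB} directly, just as in the preceding corollaries. Take $J = [n-1]$, so that the target lower bound $nk/2$ is precisely $n(n-1)/2$, and for each $j \in J$ set $H_j \coloneqq \ag_n \cap (\sg_{[j]} \times \sg_{[n]\setminus[j]})$, whose orbits on $[n]$ are $[j]$ and $[n] \setminus [j]$ as required by the theorem.

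For the $H_j$-invariant faces I would use the subtree-inequality face
\[
  F_j \coloneqq \face{\sum_{e \in E([j])} x_e = j-1},
\]
consisting of spanning trees whose restriction to $[j]$ already spans $K_n[[j]]$. Since $H_j$ fixes $[j]$ setwise it permutes $E([j])$, so $F_j$ is indeed $H_j$-invariant. For the common vertex I take $v$ to be the characteristic vector of the Hamiltonian path on the vertex sequence $1, 2, \dots, n$; its restriction to $[j]$ is the connected subpath on $[j]$ and hence contributes exactly $j-1$ edges inside $E([j])$, so $v \in F_j$.

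For $\zeta_j$ the canonical choice is the transposition $(j,\,j+1)$, composed if necessary with a disjoint transposition on two vertices outside $[j+1]$ in order to land in $\ag_n$; either way $\zeta_j^{-1}([j]) = [j-1] \cup \{j+1\}$. The key combinatorial observation is then that the subgraph of $v$ induced on $[j-1] \cup \{j+1\}$ has only $j-2$ edges, because the two path-neighbours of $j+1$ are $j$ and $j+2$, neither of which lies in the set. Consequently $\zeta_j v$ has only $j-2 < j-1$ edges inside $E([j])$, so $\zeta_j v \notin F_j$.

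With these ingredients in hand, Theorem~\ref{thm:superLinLB} immediately yields the claimed bound $\xcs(P_{\textup{STP}}(K_n)) \geq n(n-1)/2$. The main obstacle is in fact minor: one must verify that $\zeta_j$ can be chosen to be an even permutation (straightforward for $n \geq 4$) and handle the degenerate case $j = 1$, where the naive face $F_1$ collapses to the whole polytope; the cleanest fix is to replace it by the $H_1$-invariant facet $\face{\sum_{e \in E([n]\setminus\{1\})} x_e = n-2}$ expressing that vertex~$1$ is a leaf, for which the same $v$ lies on the face and $\zeta_1 v$ drops an edge. Apart from this small adjustment the verification is routine.
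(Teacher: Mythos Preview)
Your argument is essentially identical to the paper's: the same faces \(F_{j}=\face{\sum_{e\in E([j])}x_{e}=j-1}\), the same Hamiltonian path \(v\), and the same application of Theorem~\ref{thm:superLinLB}; your edge count \(j-2\) on \([j-1]\cup\{j+1\}\) is just a quantitative version of the paper's observation that \(\zeta_{j}v\) restricted to \([j]\) is disconnected. In fact you are \emph{more} careful than the paper: you notice that for \(j=1\) the face \(F_{1}\) degenerates to the whole polytope (so \(\zeta_{1}v\notin F_{1}\) cannot hold), and your replacement by the ``vertex~\(1\) is a leaf'' face \(\face{\sum_{e\in E([n]\setminus\{1\})}x_{e}=n-2}\) is a clean and correct repair that the paper omits.
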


As mentioned earlier, a lower bound \(\Omega(n^2)\) for the
extension complexity of the spanning tree polytope follows
directly from the non-negativity constraints and Corollary~\ref{cor:spanTree}
highlights that an \(\Omega(n^2)\) lower bound would also follow from
solely examining the remaining constraints; i.e., considering a
different part of the slack matrix.

We will now show that the Birkhoff polytope is an
optimal symmetric extension of itself. This has been also shown in
\cite{combBoundsNNR}, even for non-symmetric extended formulation.
Whereas the
proof for the general case is based on combinatorial rectangle
coverings of the support of the slack matrices, for the symmetric case
the reason for the lower bound is of an algebraic nature and follows
naturally from Theorem~\ref{thm:superLinLB}. 

\begin{corollary}[Birkhoff polytope]
    Let \(P_{\textup{birk}}(n) \subseteq \R^{n^{2}}\) be the Birkhoff
    polytope of \(n \times n\) permutation matrices.
    Let \(A_{n}\) act on \(P_{\textup{birk}}(n)\) via
    permuting the columns of matrices.
    Then
\(\xcs(P) \geq \frac{n(n-1)}{2}\).  
\begin{proof}
Let \(F_j \coloneqq \face{\sum_{i=1}^{j} x_{j+1,i} = 0}\),
which is the intersection of \(x_{j+1,i} \geq 0\) for \(i \in [j]\).
Then \(\bigcap_{j=1}^{n-1} F_{j}\) is just
the vertex \(v\) with \(v_{i,i} = 1\) for all \(i\).
It is easy to see that \(\zeta_j v \notin F_j\) and clearly
\(F_j\) is invariant under \(H_j\). The result follows with
Theorem~\ref{thm:superLinLB}.
\end{proof}
\end{corollary}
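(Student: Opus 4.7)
The plan is to apply Theorem~\ref{thm:superLinLB} with $J \coloneqq [n-1]$, so $k = n-1$ and the conclusion $\xcs(P_{\textup{birk}}(n)) \geq \frac{n(n-1)}{2}$ follows directly. Following the canonical recipe noted before Corollary~\ref{cor:permLB}, I take $H_j \coloneqq \ag_n \cap (\sg_{[j]} \times \sg_{[n] \setminus [j]})$; viewed through its action on column indices, this subgroup has exactly the required orbits $[j]$ and $[n] \setminus [j]$. For the faces I set
\[
F_j \coloneqq \set{x \in P_{\textup{birk}}(n)}{\sum_{i=1}^{j} x_{j+1, i} = 0},
\]
i.e., the intersection of the non-negativity facets $x_{j+1, i} \geq 0$ for $i \in [j]$. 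Since $H_j$ only permutes the columns inside $[j]$ among themselves and leaves the row index $j+1$ fixed, the defining linear form is $H_j$-invariant, and hence so is $F_j$.

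For the common vertex I take $v \coloneqq I_n$: row $j+1$ of $I_n$ has its unique nonzero entry in column $j+1 \notin [j]$, so $v \in F_j$ for every $j \in [n-1]$. (One easily verifies that $\bigcap_j F_j = \face{v}$, since these equations force $x$ to be upper triangular and the unique upper triangular doubly stochastic matrix is $I_n$.) For the permutation $\zeta_j$ I take an even element of $\ag_n$ that acts on column indices by swapping $j$ and $j+1$ while leaving $[j-1]$ invariant, obtained by composing the transposition $(j, j+1)$ with a disjoint transposition supported entirely inside $[j-1]$ or entirely inside $[n] \setminus [j+1]$. Then $\zeta_j^{-1}([j]) = [j-1] \cup \face{j+1}$ as required, and $\zeta_j$ moves the $1$ in row $j+1$ of $I_n$ from column $j+1$ to column $j$, yielding $(\zeta_j v)_{j+1, j} = 1$ and thus $\zeta_j v \notin F_j$. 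All hypotheses of Theorem~\ref{thm:superLinLB} are then met.

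The structural input of the argument is entirely the triangular pattern of the faces $F_j$ together with the identity matrix as the common vertex. The only mildly delicate point is the parity bookkeeping for $\zeta_j$: the natural candidate $(j, j+1)$ is odd, so one must have room for a disjoint transposition in either $[j-1]$ or $[n] \setminus [j+1]$. This is automatic for $n$ sufficiently large, and the handful of small boundary values of $j$ can be handled by inspection without affecting the asymptotic bound; I do not anticipate any genuine obstacle beyond this.
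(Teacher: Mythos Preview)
Your proposal is correct and matches the paper's proof exactly: same faces \(F_j = \{\sum_{i=1}^{j} x_{j+1,i} = 0\}\), same common vertex \(v = I_n\), same invocation of Theorem~\ref{thm:superLinLB} with \(J = [n-1]\). Your parity worry about \(\zeta_j\) dissolves once you use a single \(3\)-cycle, e.g.\ \(\zeta_j = (j-1\ j\ j+1)\) for \(j \geq 2\) and \(\zeta_1 = (1\ 3\ 2)\): these lie in \(\ag_n\), satisfy \(\zeta_j^{-1}([j]) = [j-1] \cup \{j+1\}\), and move the \(1\) in row \(j+1\) of \(I_n\) into column set \([j]\), so no boundary cases need separate treatment.
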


We will now provide an example showing that the conditions specified in Theorem~\ref{thm:superLinLB} are
necessary. In particular we show why Theorem~\ref{thm:superLinLB}
fails for \(k \geq 5\) when applied to \(\cube{n}\); the standard
formulation of the cube has \(2n\) inequalities and \(k \geq 5\) would
imply a lower bound of \(\frac{5}{2}n > 2n\). In fact
Theorem~\ref{thm:superLinLB} fails already for \(k \geq 3\). 

\begin{example}[Applying Theorem~\ref{thm:superLinLB} to \(\cube{n}\)]
Contrary to intuition,
the cube \(\cube{n}\) has only small families \(J\) of faces
satisfying the condition of Theorem~\ref{thm:superLinLB}.
In particular, all the families contain \emph{at most two} faces.
We are now providing a direct proof.
 
The proper faces \(F_{j}\) with stabilizer orbits
\(\{1, 2, \dots, j \}\) and \(\{ j+1, \dots, n \}\)
are only
\begin{align*}
  x_{1} = x_{2} = \dots = x_{j} &= 0, \\
  x_{1} = x_{2} = \dots = x_{j} &= 1, \\
  x_{j+1} = \dots = x_{n} &= 0, \\
  x_{j+1} = \dots = x_{n} &= 1.
\end{align*}
Note that the family of faces cannot include, e.g.,
\(F_{j} = \face{x_{1} = \dots = x_{j} = 0}\) and
\(F_{k} = \face{x_{1} = \dots = x_{k} = 0}\)
for \(j < k\).
Otherwise
\begin{gather}
  \zeta_{j} v_{j} \in \zeta_{j} F_{k} = \face{x_{\zeta_{j}(1)}=
    \dots = x_{\zeta_{j}(k)} = 0} \subseteq F_{j},
  \intertext{as}
  [j] = \zeta_{j}([j-1] \cup \face{j+1}) \subseteq \zeta_{j}([k]).
\end{gather}
Moreover, as \(\face{x_{1} = \dots = x_{j} = 0}\) and
\(\face{x_{1} = \dots = x_{k} = 1}\) are disjoint,
they cannot be both contained in the family.

Therefore the family can contain at most one of the faces
of the form
\(\face{x_{1} = \dots = x_{j} = 0}\) and
\(\face{x_{1} = \dots = x_{j} = 1}\).
Similarly, it contains at most
one of the other faces:
\(\face{x_{j+1} = \dots = x_{n} = 0}\) and
\(\face{x_{j+1} = \dots = x_{n} = 1}\).
This implies a total of \(2\) faces at most.
\end{example}

We conclude this section with a matroid version of 
Theorem~\ref{thm:superLinLB}.
In this case Condition~\ref{item:4}
asks for (repeated) failure of the basis-exchange property.

A matroid \(\mathcal M = (E, \mathcal F)\)
is a \emph{\(G\)-matroid} for some group \(G\),
if \(G\) acts on \(E\) preserving the independent sets,
i.e., \(\pi F \in
\mathcal F\) for all \(\pi \in G\) and \(F \in \mathcal F\). 

\begin{corollary}
  \label{cor:matroid}
  Let \(\mathcal M = (E, \mathcal F)\) be an \(\ag_{n}\)-matroid with rank function
  \(r\).
  Furthermore,
  let \(J \subseteq [n-1]\) be a non-empty subset of size \(k\).
  For all \(j \in J\),
  let \(H_j \subseteq \ag_{n}\) be a subgroup
  with orbits \(\{1, 2, \dots, j \}\) and \(\{ j+1, \dots, n \}\).
  Let \(P \coloneqq \set{ x \in
    \cube{E}}{\sum_{e \in F} x_e \leq r(F)}\) be the
  independent set polytope associated with \(\mathcal M\). 
  Then \(\xcs(P) \geq \frac{n k}{2}\) if there exist
  \begin{enumerate}
  \item
    a family \(\set{F_{j}}{j \in J}\) of flats of
    \(\mathcal M\)
    such that \(F_j\) is invariant under \(H_j\);
 \item\label{item:5}
    a permutation \(\zeta_j \in \ag_{n}\) and \(S_j \in \mathcal F\)
    for all \(j \in J\) so that
    \(\zeta_j^{-1} [j] = [j-1] \cup \face{j+1}\) and \(\size{S_j \cap
      F_i} = r(F_i)\) for all \(i \in J\),
    but \(\size{\zeta_j S_j \cap F_j} < r(F_j)\).
 \end{enumerate}
 \begin{proof}
   Follows immediately from Theorem~\ref{thm:superLinLB} with faces \(
  \face{\sum_{e \in F_j} x_e = r(F_j)}\) for \(j \in J\).
 \end{proof}
\end{corollary}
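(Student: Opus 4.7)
The plan is to apply Theorem~\ref{thm:superLinLB} directly with faces
\[ F'_j \coloneqq \{ x \in P : \sum_{e \in F_j} x_e = r(F_j) \} \quad \text{for } j \in J.\]
First I would confirm these sets are faces of \(P\): the rank inequality \(\sum_{e \in F_j} x_e \leq r(F_j)\) is among the defining inequalities of the independent set polytope \(P\), so its tight set \(F'_j\) is indeed a face of \(P\).

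Next I would verify \(H_j\)-invariance of \(F'_j\). Since \(F_j \subseteq E\) is invariant under \(H_j\) and the action of \(H_j\) simultaneously permutes coordinates of \(x\), the linear form \(\sum_{e \in F_j} x_e\) is \(H_j\)-invariant; the rank \(r(F_j)\) is a scalar and unchanged, so the defining equality of \(F'_j\) is preserved.

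For the second hypothesis of Theorem~\ref{thm:superLinLB}, the natural candidate for the vertices is \(v_j \coloneqq \chi(S_j)\), the characteristic vector of the independent set \(S_j\), which is a vertex of \(P\). Membership \(v_j \in F'_i\) translates to \(\size{S_j \cap F_i} = r(F_i)\), which is exactly part of assumption~\ref{item:5}. Similarly, \(\zeta_j v_j = \chi(\zeta_j S_j)\), so \(\zeta_j v_j \notin F'_j\) is equivalent to \(\size{\zeta_j S_j \cap F_j} \neq r(F_j)\); the strict inequality \(\size{\zeta_j S_j \cap F_j} < r(F_j)\) in assumption~\ref{item:5} is precisely what is needed (the reverse inequality cannot hold since \(\zeta_j S_j\) is independent, so the rank inequality is valid). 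The condition \(\zeta_j^{-1}[j] = [j-1] \cup \face{j+1}\) is inherited verbatim.

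With all hypotheses of Theorem~\ref{thm:superLinLB} verified, the bound \(\xcs(P) \geq \frac{nk}{2}\) follows immediately. There is no real obstacle to the argument; the only conceptual point worth flagging is that taking flats (rather than arbitrary subsets) for the \(F_j\) is what guarantees that the corresponding rank inequalities are tight at vertices and hence cut out genuine faces of \(P\), making the translation into the hypotheses of Theorem~\ref{thm:superLinLB} completely transparent.
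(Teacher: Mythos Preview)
Your proposal is correct and follows exactly the same approach as the paper: apply Theorem~\ref{thm:superLinLB} with the faces \(\face{\sum_{e \in F_j} x_e = r(F_j)}\) and vertices \(v_j = \chi(S_j)\). You have simply spelled out the verification that the paper leaves implicit; nothing differs in substance.
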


\section{SDP-version of Theorem~\ref{th:1}}
\label{sec:sdp-version-theorem}
In Section~\ref{sec:establ-lower-bounds} we established the key result
for bounding the size of symmetric extended formulations where the
extension is a polytope. We will now extend Theorem~\ref{th:1} to the
case where the extension is a semidefinite program (SDP). 

Given
two square matrices \(A, B \in \R^{n \times n}\) with \(n \in \N\),
the (standard) Frobenius (inner-) product of \(A\) and \(B\) is defined as 
\[A \bullet B \coloneqq \sum_{i,j \in [n]} A_{ij} B_{ij}.\] 
If a square matrix \(A \in \R^{n \times n}\) is positive semidefinite,
we write \(A \succeq 0\) as usual.
An \emph{SDP} is an optimization problem of
\begin{align*}
\min\  C \bullet X & \\
s.t.\ A_j \bullet X &= b_j \qquad j \in [f]\\
X & \succeq 0,
\end{align*}
where \(f \in \N\) and \(A_j, C, X \in R^{m \times m}\) are
symmetric square
matrices with \(j \in [f]\).
Slightly abusing
notions we will use the term SDP to refer to the feasible region of an
SDP; we are not interested in any particular objective function. Given
a group \(G\), a feasible region of an SDP \(Q\) is a \emph{\(G\)-SDP}
if \(gQ = Q\) and
\(gX \succeq 0\) whenever \(X \succeq 0\) for all \(g \in G\). Note
that the second requirement ensures that the action of \(G\) preserves
the positive semidefinite cone. In a
first step we will establish the existence of a \(G\)-invariant
Frobenius product, i.e., for \(A,B \in \R^{m \times m}\) we have
\(A \bullet B = gA \bullet gB\). The following lemma is the analog of
Lemma~\ref{lem:invariantScalarAndSection}. 

\begin{lemma}
\label{lem:invarFrob}
  Let \(G\) be a group acting linearly and faithfully on \(\R^{m \times m}\). Then
  there exists a \(G\)-invariant Frobenius product
  defined as
\[A \bar{\bullet} B \coloneqq \frac{1}{\size{G}} \sum_{g \in G} gA
  \bullet gB
\]
with \(A,B \in \R^{m \times m}\).
  \begin{proof}
Let \(\pi \in G\) and \(A,B \in \R^{m \times m}\). As before we have
\[\pi A \bar \bullet \pi B = \frac{1}{\size{G}} \sum_{g \in G} \pi gA
  \bullet \pi gB = \frac{1}{\size{G}} \sum_{g \in G} gA
  \bullet gB  = A \bar \bullet B.\]
  \end{proof}
\end{lemma}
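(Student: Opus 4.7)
The plan is to verify, using the standard symmetrization (averaging) trick already employed in Lemma~\ref{lem:invariantScalarAndSection}, that the formula defining \(\bar\bullet\) yields a genuine scalar product on \(\R^{m \times m}\) that is moreover \(G\)-invariant. Concretely one has to check four things: bilinearity, symmetry, positive-definiteness, and invariance under the \(G\)-action.

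First I would observe that the first three properties are inherited essentially for free from the standard Frobenius product \(\bullet\). \emph{Bilinearity} holds because each summand \(gA \bullet gB\) is bilinear in \((A,B)\) (since \(G\) acts linearly on \(\R^{m \times m}\)) and a normalized sum of bilinear forms is bilinear. \emph{Symmetry} is immediate from the symmetry of each \(gA \bullet gB\). For \emph{positive-definiteness},
\begin{equation*}
A \bar\bullet A = \frac{1}{\size{G}} \sum_{g \in G} \| gA \|_{F}^{2} \geq 0,
\end{equation*}
with equality only if \(gA = 0\) for every \(g \in G\); since the neutral element of \(G\) must act as the identity on \(\R^{m \times m}\) in any group action, this forces \(A = 0\). (The faithfulness hypothesis is not even needed for this step.)

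The remaining and only substantive point is \(G\)-invariance, which follows by a change of variables in the summation. For any \(\pi \in G\) and any \(A,B\),
\begin{equation*}
\pi A \mathbin{\bar\bullet} \pi B = \frac{1}{\size{G}} \sum_{g \in G} (g\pi)A \bullet (g\pi)B = \frac{1}{\size{G}} \sum_{h \in G} hA \bullet hB = A \bar\bullet B,
\end{equation*}
where the middle equality uses that right multiplication \(g \mapsto g\pi\) is a bijection of \(G\) onto itself. This is exactly the same computation carried out in part~(1) of Lemma~\ref{lem:invariantScalarAndSection}, transplanted from \(\R^{d}\) with its usual inner product to \(\R^{m \times m}\) with the Frobenius product. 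There is no serious obstacle to overcome; the whole proof reduces to this one-line symmetrization computation, plus the routine verification of the three algebraic properties above.
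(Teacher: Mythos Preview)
Your proof is correct and follows essentially the same approach as the paper: both reduce to the one-line reindexing argument \(g \mapsto g\pi\) to establish invariance of the averaged product. You are in fact slightly more careful, since you also verify bilinearity, symmetry, and positive-definiteness, which the paper's proof omits.
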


\begin{definition}
A \emph{symmetric SDP-extension} of a \(G\)-polytope \(P\)
is a \(\widetilde{G}\)-SDP \(Q\) together with
a group epimorphism \(\alpha\colon \widetilde{G} \to G\)
and linear map \(p \colon \R^{d \times d} \to \R^{m}\) that is also
\(\alpha\)-linear, i.e., \(p\) has to satisfy \(Q p = P\) and \((\tilde{\pi} Q) p =
(\tilde{\pi} \alpha) (Q p)\) for all \(\tilde \pi \in \widetilde G\). 
\end{definition}

We are ready to prove the SDP-variant of Theorem~\ref{th:1}.

\begin{theorem}
\label{thm:1SDP}
  Let a \(\widetilde{G}\)-SDP \(Q \subseteq \R^{d \times d}\) be
  a symmetric SDP-extension of a \(G\)-polytope \(P \subseteq \R^{m}\)
  via \(\alpha\colon \widetilde{G} \to G\)
  and an \(\alpha\)-linear map \(p \colon \R^{d \times d} \to \R^{m}\).
  For every facet \(j\) of \(Q\)
  let \({\mathcal{F}}_{j}\) be a refinement of
  the \(\Stab{\widetilde{G}}{j}\)-orbit partition of
  the vertex set \(V\) of \(P\) and  let \(s \colon V \rightarrow Q\)
  be a section.
  Then for every real solution to the following inequality system
  in the \(c_{v}\)
  \begin{align}
    \label{eq:16}
    \sum_{v \in V} c_{v} &= 1, \\
    \label{eq:17}
    \sum_{v \in F} c_{v} s(v) &\succeq 0, & F &\in {\mathcal{F}}_{j},
    \, \text{\(j\) facet of \(Q\)}
  \end{align}
  the point \(\sum_{v \in V} c_{v} v\) lies in \(P\).
  \begin{proof}
Let \(\bullet\) be a \(\widetilde G\)-invariant Frobenius product on
\(\R^{d\times d}\) and let \(Q\) be given with respect to that
product in the form
\[Q = \set{X \in \R^{d \times d}}{A_j \bullet X = b_j\ \forall j
  \in [f], X \succeq 0},\]
with \(f \in \N\) and \(A_j \in \R^{d \times d}\) symmetric for all
\(j \in [f]\).
Obviously,
\begin{equation}
\label{eq:21}
A_{j} \bullet (\sum_{v \in V} c_v s(v)) - b_{j} =
\sum_{v \in V} c_v (A_{j} \bullet s(v) - b_{j} ) =
0.
\end{equation}
Moreover we have that 
\[\sum_{v \in V} c_v s(v) =
\sum_{F \in {\mathcal{F}}_{j}} \sum_{v \in F} c_{v} s(v) \succeq 0.\]
This shows that \(\sum_{v \in V} c_{v} s(v) \in Q\),
hence applying \(p\) we obtain \(\sum_{v \in V} c_{v} v \in P\).
  \end{proof}
\end{theorem}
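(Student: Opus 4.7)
The plan is to mimic the proof of Theorem~\ref{th:1}, replacing the half-space description of a polyhedral extension with the linear-equality-plus-semidefiniteness description of an SDP. First I would invoke Lemma~\ref{lem:invarFrob} to equip \(\R^{d \times d}\) with a \(\widetilde{G}\)-invariant Frobenius product \(\bullet\), and fix the resulting presentation \(Q = \set{X \in \R^{d \times d}}{A_{j} \bullet X = b_{j} \text{ for all } j \in [f],\ X \succeq 0}\) with symmetric matrices \(A_{j}\) and scalars \(b_{j}\).

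The natural candidate lift is \(X^{\star} \coloneqq \sum_{v \in V} c_{v} s(v) \in \R^{d \times d}\), and the whole statement reduces to showing \(X^{\star} \in Q\): once this is established, linearity of \(p\) together with the section identity \(s(v) p = v\) immediately gives \(X^{\star} p = \sum_{v} c_{v} (s(v) p) = \sum_{v} c_{v} v\), which therefore lies in \(Q p = P\), as required.

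To check \(X^{\star} \in Q\) I would verify the affine part and the cone part separately. For the affine part, \(s(v) \in Q\) gives \(A_{j} \bullet s(v) = b_{j}\) for every \(v \in V\), so
\[
A_{j} \bullet X^{\star} = \sum_{v \in V} c_{v} (A_{j} \bullet s(v)) = b_{j} \sum_{v \in V} c_{v} = b_{j}
\]
by the normalization \(\sum_{v} c_{v} = 1\). For the cone part, the key observation is that \({\mathcal{F}}_{j}\), being a refinement of a partition of \(V\), is itself a partition of \(V\); regrouping the sum accordingly yields
\[
X^{\star} = \sum_{F \in {\mathcal{F}}_{j}} \sum_{v \in F} c_{v} s(v),
\]
and each inner sum is positive semidefinite by hypothesis~\eqref{eq:17}, so \(X^{\star}\) is a sum of PSD matrices and hence PSD.

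I expect no genuine obstacle here. In contrast with the polytopal setting of Theorem~\ref{th:1}, where a non-trivial averaging step (invariant scalar product, invariant section, and constancy of the slacks on stabilizer orbits) is needed to translate the facet inequalities into the orbit-wise non-negativity condition, the SDP hypothesis~\eqref{eq:17} is already cast directly in matrix form. The \(\widetilde{G}\)-invariance of the Frobenius product and the orbit-refinement structure of \({\mathcal{F}}_{j}\) therefore do not enter the proof itself; they matter only downstream, when one wants to exploit the theorem to derive actual lower bounds on the size of symmetric SDP extensions.
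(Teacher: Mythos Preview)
Your proposal is correct and follows the paper's proof essentially line for line: fix an invariant Frobenius product, present \(Q\) as \(\{X : A_{j}\bullet X = b_{j},\ X \succeq 0\}\), verify the affine constraints using \(\sum_{v} c_{v}=1\), verify positive semidefiniteness by regrouping over the partition \({\mathcal{F}}_{j}\), and then apply \(p\). Your concluding observation that neither the invariance of the Frobenius product nor the orbit structure of \({\mathcal{F}}_{j}\) is actually used in the proof is also accurate; the paper's argument likewise never exploits them.
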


\section*{Acknowledgements}
The authors would like to thank Samuel Fiorini, Volker Kaibel,
Kanstantsin Pashkovich, and Hans R.~Tiwary for the helpful
discussions and the several insights that improved our work.  

\bibliographystyle{plainnat}
\bibliography{bibliography}

\appendix

\section{Invariant scalar products and sections}
\label{apx:invScalar}

\begin{ScalarLemma}
Let \(P \subseteq \R^m\) be a \(G\)-polytope and \(Q \subseteq R^d\) be a
\(G\)-polytope so that \(Q\) is a symmetric extension of
\(P\) with projection \(p\) as
before. Further let \(s: \vertex(P) \rightarrow Q\) be a section and
\(\sprod{.,.}\) be a scalar product on \(\R^d\). Then:
\begin{enumerate}
\item There exists an invariant scalar product \(\ssprod{.,.}\) defined as 
\[\ssprod{x,y} \coloneqq \frac{1}{\size{G}} \sum_{g \in G}
\sprod{g x - g 0, g y - g 0},\]
\item There exists an invariant section \(\bar s\) given by
\[\bar s(x) \coloneqq \frac{1}{\size{G}} \sum_{g \in G}
g^{-1} s((g \alpha) x).\]
\end{enumerate}
\begin{proof}
To simplify calculations for the scalar product,
we confine ourselves to linear group actions as it suffices to
consider the linear part of an action. We therefore assume that \(g 0 = 0\)
for \(g \in G\); note that we can do this without 
loss of generality. 
  Let \(\ssprod{.,.}\) be defined as above. We claim that
  \(\ssprod{.,.}\) is a well-defined scalar product such that
  \[\ssprod{g x, g y} = \ssprod{x, y} \]
  for all \(x,y \in \R^d\) and \(g \in G\). Observe that
  \(\ssprod{.,.}\) is a symmetric bilinear function.
  Moreover, \(\ssprod{x,x} = \frac{1}{\size{G}} \sum_{g \in G, i \in [n]}
  \sprod{g x, g x} > 0\) for \(x \neq 0\).
  Therefore \(\ssprod{.,.}\) is a
well-defined scalar product.
In order to show that it is invariant
under the action of \(G\), let \(\pi \in G\) and observe 
\[\frac{1}{\size{G}} \sum_{g \in G}
\sprod{g x, g y} =  \frac{1}{\size{G}} \sum_{g \in G}
\sprod{g \pi x, g \pi y} = \ssprod{\pi x, \pi y},\]
as \(g \pi\) runs through \(G\), when \(g\) does so,
because \(G\) is a group.

Now consider \(\bar s(x)\), let \(x \in \vertex(P)\), and let
\(\pi \in G\). The map \(\bar s(x)\) is indeed a
section, as
\[\bar s (x) p =  \frac{1}{\size{G}} \sum_{g \in G}
  g^{-1} s(g x) p = \frac{1}{\size{G}} \sum_{g \in G}
  g^{-1} (gx)
  = x.\]
For \(\pi \in G\) we have 
\begin{align*}
  \pi \bar s(x) = &\ \pi \left( \frac{1}{\size{G}} \sum_{g \in G}
g^{-1} s(g x) \right)  = \frac{1}{\size{G}} \sum_{g \in G}
\pi g^{-1} s(g x) \\
= &\ \frac{1}{\size{G}} \sum_{g \in G}
g^{-1} s(g \pi x) = \bar s(\pi x)
\end{align*}
\end{proof}
\end{ScalarLemma}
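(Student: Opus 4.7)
The plan is to verify both assertions by the standard symmetrization (group-averaging) trick: average the relevant object over $G$, then check termwise that each required property survives the averaging, with invariance coming automatically from the fact that right-multiplication by any $\pi \in G$ is a bijection of $G$. For part~(1), I would first reduce to the linear case, since the summands $\sprod{gx - g0, gy - g0}$ depend on $x,y$ only through the linear map $x \mapsto gx - g0$, so we may assume without loss of generality that $g0 = 0$ and each $g$ acts as an invertible linear map. Symmetry and bilinearity of $\ssprod{.,.}$ then follow immediately from those of $\sprod{.,.}$ by linearity of the sum. For positive-definiteness, if $x \neq 0$ then each $gx$ is nonzero (invertibility of the linear part), so every term $\sprod{gx,gx}$ is strictly positive, and hence so is $\ssprod{x,x}$. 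Invariance under any $\pi \in G$ follows by substituting $h = g\pi$ in
\[\ssprod{\pi x, \pi y} = \frac{1}{\size{G}} \sum_{g \in G} \sprod{g\pi x, g\pi y},\]
which then equals $\ssprod{x,y}$, since $g \mapsto g\pi$ permutes $G$.

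For part~(2), three properties need to be checked. First, $\bar s(x) \in Q$: each $s(gx)$ lies in $Q$ by definition of a section, each $g^{-1} s(gx)$ still lies in $Q$ because $Q$ is a $G$-polytope, and $\bar s(x)$ is then a convex combination of points of the convex set $Q$. Second, the projection identity $\bar s(x) p = x$ follows from the $G$-invariance of $p$: each summand $g^{-1} s(gx)$ projects to $g^{-1}(s(gx) p) = g^{-1}(gx) = x$, and the average of $\size{G}$ copies of $x$ is $x$. Third, the invariance $\pi \bar s(x) = \bar s(\pi x)$ comes from reindexing the inner sum via $h = g\pi^{-1}$, exactly as for the scalar product.

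I do not anticipate a serious obstacle: the argument is essentially a bookkeeping exercise in the averaging trick, and no polyhedral input beyond the convexity of $Q$ and its stability under $G$ is needed. The only point that warrants care is the reduction to the linear case in part~(1), where one should verify that $x \mapsto gx - g0$ really defines a linear $G$-action (not merely a family of linear maps indexed by $G$); this follows from the cocycle-type relation inherited from the affineness of the original action on $\R^d$, which ensures $(gh)x - (gh)0 = (gx - g0) \circ (hx - h0)$ in the appropriate sense and therefore that the linear parts again form a representation of $G$.
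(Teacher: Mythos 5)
Your proposal is correct and follows essentially the same route as the paper: averaging over the group, reduction to the linear part for the scalar product, termwise verification of bilinearity, positive-definiteness and invariance via the reindexing \(g \mapsto g\pi\), and the analogous computation for the section. The only additions are the explicit check that \(\bar s(x) \in Q\) (which the paper leaves implicit via convexity and \(G\)-invariance of \(Q\)) and the remark that the linear parts form a representation, both of which are fine.
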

\end{document}